  \newcommand{\NN}{\mathbb N}
  \newcommand{\B}{\mathcal B}
  \newcommand{\lb}{\mathrm{lb}}
  \newcommand{\XX}{\mathbf{X}}
  \newcommand{\YY}{\mathbf{Y}}
  \newcommand{\Time}{\mathrm{time}}
  \newcommand{\length}[1]{\left|#1\right|}
  \newcommand{\flength}[1]{\left|#1\right|}
  \newcommand{\sdzero}{\textup{\texttt{0}}}
  \newcommand{\sdone}{\textup{\texttt{1}}}
  \newcommand{\albe}{\Sigma}
  \newcommand{\bigo}{\mathcal O}
  \newcommand{\demph}[1]{\textbf{#1}}
  \newcommand{\dom}{\mathrm{dom}}
  \newcommand{\str}{\mathbf}
  \newcommand{\reg}{\albe^{**}}
  \newcommand{\id}{\operatorname{id}}
  \newcommand{\FLR}{\operatorname{FLR}}
  \newcommand{\timef}{\operatorname{time}}
  \newcommand{\opt}{\operatorname{OPT}}
  \newcommand{\PSC}{\operatorname{PSC}}
  \renewcommand{\omega}{\NN}
\title{Polynomial running times\\for polynomial-time oracle machines}
\author{Akitoshi Kawamura and Florian Steinberg}
\date{}
\begin{document}
  \maketitle
  \begin{abstract}
    This paper introduces a more restrictive notion of feasibility of functionals on Baire space than the established one from second-order complexity theory.
    Thereby making it possible to consider functions on the natural numbers as running times of oracle Turing machines and avoiding second-order polynomials, which are notoriously difficult to handle.
    Furthermore, all machines that witness this stronger kind of feasibility can be clocked and the different traditions of treating partial functionals from computable analysis and second-order complexity theory are equated in a precise sense.
    The new notion is named \lq strong polynomial-time computability\rq, and proven to be a strictly stronger requirement than polynomial-time computability.
    It is proven that within the framework for complexity of operators from analysis introduced by Kawamura and Cook the classes of strongly polynomial-time computable functionals and polynomial-time computable functionals coincide.
  \end{abstract}
  \tableofcontents
\section{Introduction}
  Modern applications of second-order complexity theory the field of computable analysis almost exclusively use time-restricted oracle Turing machines to define and argue about the class of polynomial-time computable functionals \cite[etc.]{MR2275414,kawamuraphd,higherorder,MR3239272,postive,CIE2016,ArXiV}.
  The acceptance of this model of computation goes back to a result by Kapron and Cook \cite{MR1374053} that characterizes the class of basic feasible functionals introduced by Mehlhorn \cite{MR0411947}.

  There are several reasons for the popularity of this model of computation.
  Firstly, it intuitively reflects what a programmer would require of an efficient program if oracle Turing machines are interpreted as programs with subroutine calls.
  I.e. the time taken to evaluate the subroutine is not counted towards the time consumption (the oracle query takes one time step) and if the result is complicated the machine is given more time for further operations.
  Secondly, it is superficially quite close to classical polynomial-time computability:
  There is a type of functions that take sizes of the inputs and return an allowed number of steps.
  A subclass of these functions are considered polynomial, or \lq fast\rq\ running times.

  On closer inspection, however, the second-order framework introduces a whole bunch of new difficulties:
  Running times of oracle Turing machines, and also the functions that are considered polynomial running times, are functions of type $\omega^\omega\times \omega\to \omega$.
  These so-called second-order polynomials are a lot less well-behaved than their first-order counterparts.
  There are no normal-form theorems, structural induction turns out to be complicated, there is no established notion of degree and so on \cite{MR3219039,Kawamura:2016:CTC:2933575.2935311}.
  Even worse:
  Second-order polynomials turn out to not be time-constructible \cite{ArXiV}.

  The framework introduced by Kawamura and Cook \cite{Kawamura:2012:CTO:2189778.2189780} addresses this problem by restricting to length-monotone string functions, thereby forcing time-constructibility of second-order polynomials.
  However, it has been argued that the restriction to length-monotone string functions seems to be an unnatural one in practice \cite{ArXiV} and that it is too restrictive to reflect some situations from practice \cite{Aminimal}.
  Thus, this paper investigates different solutions to the same set of problems.
  \subsection*{The content of this paper}
    The first part of the paper investigates the boundaries of the polynomial-time framework.
    Descriptions of second-order polynomials are introduced as a replacement of a normal-form theorem which currently seems to be out of reach.
    In particular they can be used to obtain polynomial majorants:
    For any second-order polynomial there is a polynomial and a number such that the values of the second-order polynomial can be bounded by an easy formula only involving these.
    The polynomial majorants come in handy the later parts of the paper.
    
    Then complexity of partial functionals is investigated.
    The traditions of how to handle partiality differ a lot between computable analysis and second-order complexity theory.
    While the former tends to avoid assumptions about functionals outside of their domain and would in particular not make restrictions on the number of steps a machine may take on elements outside of the domain of the functional it computes, the latter usually requires the existence of total polynomial-time computable extensions.
    The two corresponding classes are introduced and proven to be actually distinct.
    That these classes can be separated can be considered to be a very strong version of the statement that second-order polynomials are not time-constructible.
    Finally, it is proven that in the most important example of use of an intermediate of the two conventions, namely the framework for complexity of operators in analysis as introduced by Kawamura and Cook, could have equivalently used the convention from second-order complexity theory.

    The second part of the paper presents a restriction on the behavior of oracle machines such that use of running times of higher type is not necessary anymore.
    It proves that the corresponding class of functionals, which are named \lq strongly polynomial-time computable functionals\rq, is a subclass of the class of polynomial-time computable functionals.
    It provides an example of a functional that is polynomial-time computable but not strongly polynomial-time computable.
    The example is not a natural example, but there are candidates for a more natural examples.

    Finally the paper presents some evidence that strong polynomial-time computability is more compatible with partial functionals and proves that within the framework for complexity of operators in analysis introduced by Kawamura and Cook, it is equivalent to polynomial-time computability.
    In particular Kawamura and Cook could have fully committed to the traditions of computable analysis in the definitions of their framework for complexity of operators from analysis.
    A functional whose domain is contained in the length-monotone functions is polynomial-time computable if and only if it is strongly polynomial-time computable.

  \subsection*{Conventions}
    Fix the finite alphabet $\Sigma:= \{\sdzero,\sdone\}$ and let $\Sigma^*$ denote the set of finite binary strings.
    Elements of $\Sigma^*$ are denoted by $\str a$, $\str b$, \ldots.
    The set of non-negative integers is denoted by $\NN$, natural numbers are denoted by $n$, $m$, \ldots and sometimes other letters.
    We identify the Baire space with the set $\B:= (\Sigma^*) ^{\Sigma^*}$ of string functions.
    Elements of $\B$ are denoted as $\varphi$, $\psi$, \ldots.
    We assume the reader to be familiar with the notions of computability and complexity theory for elements of the Baire space introduced via Turing machines.

    We call functions of type $\B\to\B$, i.e. functions from Baire space to Baire space, functionals.
    To compute functionals, this paper uses oracle Turing machines:
    An oracle Turing machine $M^?$ is a Turing machine that has an additional oracle query tape and an oracle query state.
    For an arbitrary $\varphi\in\B$, we obtain a string function $M^\varphi$ as follows:
    If the computation of $M^?$ (with oracle $\varphi$ and) on input $\str a$ enters the query state,
    the content of the oracle query tape, say $\str b$, is replaced with the value $\varphi(\str b)$.
    Afterwards the computation continues and if it terminates, its return value is used as the value $M^\varphi(\str a)$ of the string function $M^\varphi$ on $\str a$.
    Note that the string function $M^\varphi$ may not be defined everywhere, as the run of the machine may diverge.
    We say that an oracle machine $M^?$ computes a partial functional $F\colon{\subseteq}\B\to\B$ if $M^\varphi = F(\varphi)$ holds for all elements of the domain of $F$.

    For measuring the time it takes a machine $M^?$ to compute its value $M^\varphi(\str a)$ on oracle $\varphi$ and input $\str a$, overwriting the oracle query $\str b$ with $\varphi(\str b)$ is considered to be done in one time step does not move the reading/writing head.
    The time it takes the machine $M^?$ to terminate with oracle $\varphi$ and on input $\str a$ is denoted by $\Time_{M^\varphi}(\str a)\in\NN$.

\section{Second-order complexity and relativization}
  Functionals are objects of type $\B\to\B$.
  For complexity considerations it is more natural to consider oracle Turing machines to compute objects of type $\B\times\Sigma^*\to\Sigma^*$.
  Each functional can be regarded an object of this type via currying: Instead of a functional $F\colon\B\to\B$ consider the mapping $\tilde F:\B\times\Sigma^*\to\Sigma^*$ defined by $\tilde F(\varphi,\str a):= F(\varphi)(\str a)$.
  In this setting, both $\varphi$ and $\str a$ should be considered inputs, and the time an oracle machine is granted should increase with the \lq size\rq\ of both inputs.
  It is clear what the size of the string input is, and the next definition fixes a notion of size for the oracles.
  \begin{definition}
    Let $\varphi\in\B$ be a string function.
    Its \demph{size function} $\flength\varphi\colon\omega\to\omega$ is defined by
    \[ \flength{\varphi}(n):= \max\{\length{\varphi(\str a)}\mid \length{\str a}\leq n\}. \]
  \end{definition}
  Running times take a size of a string function and a size of a string and return an allowed number of steps, therefore they are objects of the type $\omega^\omega\times \omega\to \omega$.
  However, not all such functions should be eligible as running times.
  For instance:
  As the inputs get bigger, the time granted to the machine should not decrease, at least as long as the functional size argument is monotone and therefore actually turns up as size of a string function.
  \begin{definition}
    We call a function $T\colon\NN^\NN\times\NN\to\NN$ a \demph{running time} if whenever $l$ and $l'$ are monotone and $l$ is point-wise bigger than $l'$, then also $T(l,\cdot)$ and $T(l',\cdot)$ are monotone and the latter is point-wise bigger than the former.
  \end{definition}
  A running time $T$ is \demph{a running time for an oracle machine $M^?$} if for any oracle $\varphi$ and string $\str a$, the run of $M^\varphi$ on input $\str a$ terminates within $T(\flength\varphi,\length{\str a})$ steps.
  That is if
  \begin{equation}\tag{RT}\label{eq:RT}
    \forall\varphi\in\B,\forall\str a\in\Sigma^*\colon\Time_{M^\varphi}(\str a)\leq T(\length{\varphi},\length{\str a}).
  \end{equation}
  
  It is not a priori clear what running times should be considered polynomial.
  The class of second-order polynomials is the smallest class of functions $P\colon\omega^\omega\times \omega\to\omega$ such that:
  \begin{itemize}
    \item All of the functions $(l,n)\mapsto p(n)$ are contained, where $p$ is a polynomial with natural numbers as coefficients.
  \end{itemize}
  And which is closed under the following operations:
  \begin{itemize}
    \item Whenever $P$ and $Q$ are contained, then so is their point-wise sum $P+Q$.
    \item Whenever $P$ and $Q$ are contained, then so is their point-wise product $P\cdot Q$.
    \item Whenever $P$ is contained then so is the function $P^+$ defined by 
    \[ P^+(l,n) := l(P(l,n)). \]
  \end{itemize}
  It is easily checked that any second-order polynomial fulfills the requirement we imposed on running times.
  \begin{definition}
    A functional on Baire space is called \demph{polynomial-time computable} if it is computed by an oracle Turing machine $M^?$ that has a second-order polynomial $P$ as running time.
  \end{definition}
  The above definition is based on a characterization by Kapron and Cook of the class of basic feasible functionals originally introduced by Mehlhorn.

  It is not obvious from the definition that the class of polynomial-time computable functionals is closed under composition.
  To see that this still holds true, we need the following two closure properties of the set of second-order polynomials:
  \begin{lemma}
    Whenever $P$ and $Q$ are second-order polynomials, then so are
    \[ (l,n)\mapsto P(Q(l,\cdot),n) \quad\text{and}\quad (l,n)\mapsto P(l,Q(l,n)). \]
  \end{lemma}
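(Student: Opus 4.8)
\textbf{Proof plan.} The plan is to prove both claims by structural induction on $P$, but to carry out the two inductions in the right order, since the argument for the first claim will rely on the second. So I would establish the second claim first.

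For the second claim I would fix a second-order polynomial $Q$ and induct on the structure of $P$. If $P(l,n) = p(n)$ for an ordinary polynomial $p$ with natural-number coefficients, I would write $p(x) = \sum_{i \le d} c_i x^i$ and observe that $P(l, Q(l,n)) = \sum_{i \le d} c_i \cdot Q(l,n)^i$, so that each summand is obtained from the constant second-order polynomial $c_i$ and finitely many copies of $Q$ via the product operation, and the whole expression is a second-order polynomial by closure under sums and products. The cases $P = P_1 + P_2$ and $P = P_1 \cdot P_2$ would follow immediately from the induction hypothesis together with closure under sums and products. For $P = R^+$, i.e.\ $P(l,n) = l(R(l,n))$, I would compute $P(l, Q(l,n)) = l\bigl(R(l, Q(l,n))\bigr)$; by the induction hypothesis $(l,n) \mapsto R(l, Q(l,n))$ is a second-order polynomial $S$, and hence $P(l, Q(l,n)) = l(S(l,n)) = S^+(l,n)$, which is a second-order polynomial.

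For the first claim I would again fix $Q$ and induct on $P$. The base case and the sum and product cases go through just as before (in the base case $P(Q(l,\cdot), n) = p(n)$ does not even depend on the first argument). For $P = R^+$ I would compute $P(Q(l,\cdot), n) = \bigl(Q(l,\cdot)\bigr)\bigl(R(Q(l,\cdot), n)\bigr) = Q\bigl(l, R(Q(l,\cdot),n)\bigr)$; by the induction hypothesis $(l,n) \mapsto R(Q(l,\cdot), n)$ is a second-order polynomial $S$, so $P(Q(l,\cdot),n) = Q(l, S(l,n))$, and this is a second-order polynomial by the second claim, already established.

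The only real obstacle I anticipate is organizational: one has to notice that the $R^+$-case of the first claim unavoidably produces a term of exactly the shape handled by the second claim, so the second claim cannot be deferred. Apart from that, every step is a direct appeal to one of the three closure operations from the definition of second-order polynomials, and the single genuinely computational point is the expansion of an ordinary polynomial into monomials in the base case of the second claim.
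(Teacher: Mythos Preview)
Your proof is correct. The key organizational insight---that the $R^+$-case of the first claim produces a term $Q(l,S(l,n))$ and therefore forces you to establish the second claim first---is exactly what makes the induction go through, and you handle it cleanly.

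Your approach differs from the one the paper actually writes out. The paper acknowledges that a direct structural induction on $P$ works (calling it ``tedious but straight-forward''), but then chooses instead to prove the result via the machinery of \emph{descriptions} (polynomial trees): for the second claim one grafts a description of $Q$ onto each leaf of a description of $P$, and for the first claim one replaces each edge of a description of $P$ by a description of $Q$. Your argument is more elementary and self-contained---it needs nothing beyond the defining closure rules---whereas the paper's argument is shorter once the description framework is in place and gives a more uniform, pictorial view of why the compositions stay polynomial. The paper's route also pays a later dividend: the same description machinery is reused to define polynomial majorants in \Cref{sec:polynomial majorants}, so the investment is amortized. Your route, by contrast, isolates precisely the one nontrivial dependency between the two claims, which the tree-grafting picture somewhat obscures.
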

  The proof can be done via a tedious but straight-forward induction on the term structure of second-order polynomials.
  Another, more elegant proof is provided in \Cref{resu:compositions}.

  \begin{proposition}\label{resu:composition}
    Let $F\colon\B\to\B$ and $G\colon\B\to\B$ be functionals that can be computed within times $T$ resp.\ $S$.
    Then $F\circ G$ can be computed in time
    \[ (l,n) \mapsto C\cdot(T(S(l,\cdot),n) + S(l,T(S(l,\cdot),n))\cdot T(S(l,\cdot),n)) \]
    for some $C\in\NN$.
    In particular, the polynomial-time computable functionals are closed under composition.
  \end{proposition}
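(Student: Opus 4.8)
The plan is to compose the two witnessing machines in the obvious way and then bound the running time of the composite by hand. Fix oracle machines $M_F^?$ and $M_G^?$ computing $F$ and $G$ with running times $T$ and $S$; by \eqref{eq:RT} both of them halt on every oracle. Let $M^?$ simulate $M_F^?$ step by step, except that whenever the simulated $M_F^?$ enters its query state with a string $\str b$ on its query tape, $M^?$ suspends this simulation, runs $M_G^?$ on input $\str b$ (answering $M_G$'s oracle queries from its own oracle $\varphi$), copies the result onto the query tape of the simulated $M_F^?$, and resumes. Since $G$ is total, $M_G^\varphi = G(\varphi)$ for every $\varphi$, so the oracle effectively seen by the simulated $M_F^?$ is $G(\varphi)$; and since $F$ is total, $M^\varphi = M_F^{G(\varphi)} = F(G(\varphi))$, so $M^?$ computes $F\circ G$.

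For the running time, abbreviate $t := T(S(\length\varphi,\cdot),\length{\str a})$. The first ingredient is the pointwise bound $\length{G(\varphi)}\le S(\length\varphi,\cdot)$: for $\length{\str b}\le n$ the value $G(\varphi)(\str b)=M_G^\varphi(\str b)$ is produced within $S(\length\varphi,\length{\str b})\le S(\length\varphi,n)$ steps, hence has length at most that. Both $\length{G(\varphi)}$ and $S(\length\varphi,\cdot)$ are monotone (size functions are monotone, and $S(\length\varphi,\cdot)$ is monotone because $\length\varphi$ is), so the monotonicity clause in the definition of a running time, applied to $T$, yields $\Time_{M_F^{G(\varphi)}}(\str a)\le T(\length{G(\varphi)},\length{\str a})\le t$. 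Hence the simulated $M_F^?$ runs for at most $t$ steps and issues at most $t$ queries, and, by the length accounting of the oracle model discussed below, each query string it writes has length at most $t$. Answering one query amounts to simulating $M_G^\varphi$ on a string of length at most $t$, which by \eqref{eq:RT} for $M_G$ and monotonicity of $S(\length\varphi,\cdot)$ takes at most $S(\length\varphi,t)$ of $M_G$'s steps; carrying that out, together with forwarding and answering $M_G$'s own oracle queries and copying the result (of length at most $S(\length\varphi,t)$) back, costs $\bigo(S(\length\varphi,t))$ steps of $M^?$, while every non-query step of $M_F^?$ is simulated with constant overhead. Summing over at most $t$ steps and at most $t$ queries gives $\Time_{M^\varphi}(\str a)\le C\cdot\bigl(t+S(\length\varphi,t)\cdot t\bigr)$ for a suitable constant $C$, which is precisely the displayed expression evaluated at $l=\length\varphi$ and $n=\length{\str a}$. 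That this expression is itself a running time follows routinely from the monotonicity properties of $T$ and $S$, so it is a running time for $M^?$.

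For the final assertion, assume in addition that $T$ and $S$ are second-order polynomials. By the lemma above, $(l,n)\mapsto T(S(l,\cdot),n)$ is a second-order polynomial, and, applying the lemma once more, so is $(l,n)\mapsto S(l,T(S(l,\cdot),n))$; since the second-order polynomials contain the constants and are closed under pointwise sums and products, the whole bound is a second-order polynomial. Thus $M^?$ witnesses that $F\circ G$ is polynomial-time computable.

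I expect the delicate step to be the length bookkeeping inside the time analysis. One must verify that the functions fed to $T$ as its first argument, namely $\length{G(\varphi)}$ and its majorant $S(\length\varphi,\cdot)$, really are monotone so that the running-time clause applies; and, more importantly, one must argue that the query strings written by $M_F$ stay bounded by $t$ and that the oracle answers arising during the nested simulations of $M_G$ stay bounded by $S(\length\varphi,t)$, so that each of the at most $t$ queries can indeed be answered in $\bigo(S(\length\varphi,t))$ steps. This is exactly where the precise conventions of the oracle model enter, and it is what dictates the particular shape of the claimed bound.
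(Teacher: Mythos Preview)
Your argument is correct and follows essentially the same route as the paper's proof: build the composite machine that simulates $M_F^?$ and answers each of its queries by running $M_G^?$ on the real oracle, then bound the $M_F$-simulation by $t=T(S(\length\varphi,\cdot),\length{\str a})$ via $\length{G(\varphi)}\le S(\length\varphi,\cdot)$ and the monotonicity of running times, and bound the at most $t$ inner $M_G$-simulations by $S(\length\varphi,t)$ each. Your write-up is a bit more explicit than the paper's about invoking the monotonicity clause for $T$ and about deriving the ``in particular'' from the lemma on closure of second-order polynomials, but the structure and all estimates are the same.
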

  \begin{proof}
    Let $M^?$ and $N^?$ be machines that compute the operators $F$ and $G$ and run in times $T$ and $S$.
    Consider the oracle machine $_MN^?$ that proceeds as follows:
    On oracle $\varphi$ and input $\str a$ it follows the computation of $M^?$ on input $\str a$ but with the commands for oracle query tape replaced by the commands for an unused memory tape.
    Each time $M^?$ poses an oracle query, instead of entering the query state it starts to carry out the steps that $N^?$ would do on input $\str b$, where $\str b$ is the content of the memory tape the oracle tape was replaced with.
    Once the machine $N^?$ terminates it switches back to following the steps of $M^?$.
    Once $M^?$ terminates also $_MN^?$ terminates.
    This machine obviously computes $F\circ G$.

    To see that the machine finishes within the specified time, note that, since $N^?$ computes $G$, the steps of the machine $_MN^?$ that copy the behavior of $M^?$ are identical with the steps $M^?$ carries out with oracle $G(\varphi)$ and input $\str a$.
    Since $S$ is a running time of $N^?$ (and in particular a running time), it holds that
    \[ \flength{G(\varphi)}(n) = \max\{\length{M^\varphi(\str a)}\mid \length{\str a}\leq n\}\leq \max\{\Time_{M^\varphi}(\str a)\mid \length{\str a}\leq n\} \leq S(\flength\varphi,n). \]
    Furthermore, $T$ is a running time of $M^?$ and therefore at most $T(S(\flength\varphi,\cdot),\length{\str a})$ steps are spent carrying out the operations of the machine $M^?$.
    
    In particular, all oracle queries $M^?$ can have at most $T(S(\flength\varphi,\cdot),\length{\str a})$ bits.
    Due to $S$ being a running time of $N^?$ the number of steps that are carried out simulating $N^?$ each time $M^?$ asks an oracle query $\str b$ with $\length{\str b}\leq T(S(\flength\varphi,\cdot),\length{\str a})$ is bounded by
    \[ \Time_{N^\varphi}(\str b) \leq S(\length\varphi,\length{\str b})\leq S(\flength\varphi,T(S(\flength\varphi,\cdot),\length{\str a})). \]
    Furthermore, $M^?$ can ask at most $T(S(\flength\varphi,\cdot),\length{\str a})$ oracle queries and thus the total number of steps that are spent simulating $N^?$ is bounded by
    \[ T(S(\flength\varphi,\cdot),\length{\str a})\cdot S(\flength\varphi,T(S(\flength\varphi,\cdot),\length{\str a})) \]
    Adding the number of steps that are carried out when simulating $M^?$ and $N^?$ respectively, and accounting for the additional steps to return heads to the beginning of tapes etc., leads to the time bound from the statement.
  \end{proof}
  Another property of polynomial time computable functionals that should be mentioned is that they preserve the class of polynomial-time computable functions.
  This can easily be checked by combining the program of a polynomial-time machine computing the function with the program of a polynomial-time oracle Turing machine computing the functional.
  
  Second-order polynomials were introduced as functions of type $\omega^\omega\times\omega\to\omega$.
  This is natural since they are considered running times.
  However, it also regularly leads to difficulties:
  It is not clear how to decide equality of two second-order polynomials from the construction procedures.
  The reader may for instance try to prove that the inequality $P\neq Q$ of two second-order polynomials as functions implies that also $P^+\neq Q^+$.
  While a proof for the general case is not known to the authors, it is possible to prove this in the case where $P\neq Q$ is realized by a strictly monotone function argument.
  Note, that while it is not an unreasonable idea to restrict the domain of the second order polynomials, it should at least contain all (not necessarily strictly) monotone functions, as these show up as length functions of string functions.
  Just like for the general case, a proof of the above if the inequality is realized by a monotone function is not known to the authors.
  This leads to problems when trying to recursively define functions on the second-order polynomials.

  \subsection{Descriptions of second-order polynomials}

    This paper handles these difficulties by using descriptions of how to construct second-order polynomials instead.
    This section presents some results about second-order polynomials that are only needed to complete the proof of \Cref{resu:compositions} above and for the very end of the paper.
    On first reading it may be skipped and rolled back to when the results are needed.
   
    When constructing a second-order polynomial using the rules specified in the last section, it seems reasonable to bundle the uses of the \lq closure under addition\rq\ and the \lq closure under multiplication\rq\ rules that happen between two uses of the \lq application of the function argument\rq\ rule together to applying a multivariate polynomial.
    Formally this procedure can be described as follows:
    \begin{definition}
      A \demph{polynomial tree} is a finite tree $T$ whose nodes are elements of $\NN[X_0,\ldots,X_k]$ where $k$ coincides with the number of children the node has and there is a specified linear order on the children of each node.
    \end{definition}
    Given a polynomial tree, recursively assign to each node a second-order polynomial:
    To a leaf $t$ assign the second order polynomial $(l,n)\mapsto t(n)$.
    Now assume that second-order polynomials $P_1,\ldots,P_k$ were assigned to each of the children $t_1,\ldots,t_k$ of a node $t$.
    Assign to $t$ the second-order polynomial
    \[ (l,n)\mapsto t(n,l(P_1(l,n)),\ldots, l(P_k(l,n))) = t(n,P_1^+,\ldots, P_k^+). \]

    \begin{definition}
      A polynomial tree is called a \demph{description} of a second-order polynomial $P$ if $P$ is assigned to the root of the tree by the above procedure.
    \end{definition}
    \begin{minipage}{.65\textwidth}
      \vspace{.05cm}
      Note that there may exist many different descriptions of the same second-order polynomial.
      For instance both of the polynomial trees on the right hand side are bot descriptions of the  second-order polynomial $(l,n)\mapsto 2l(n)$.
      Whether or not these \quad\quad
      \vspace{-.26cm}
    \end{minipage}
    \begin{minipage}{.35\textwidth}
      \hfill
      \begin{tikzpicture}
        \node at (0,0) {$X_1+X_2$};
        \draw[->] (-.15,-.15) -- (-.35,-.85);
        \node at (-.5,-1) {$X_0$};
        \draw[->] (.15,-.15) -- (.35,-.85);    
        \node at (.5,-1) {$X_0$};
      \end{tikzpicture}
      \hfill
      \begin{tikzpicture}
        \node at (0,0) {$2 X_1$};
        \draw[->] (0,-.15) -- (0,-.85);
        \node at (0,-1) {$X_0$};
      \end{tikzpicture}
      \hfill
    \end{minipage}
    ambiguities can completely be avoided seems to be related to whether or not the operation $P\mapsto P^+$ is injective.

    An easy structural induction proves:
    \begin{lemma}
      Every second-order polynomial has a description.
    \end{lemma}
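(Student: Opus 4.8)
The plan is a structural induction on the inductive definition of the class of second-order polynomials, realised through three operations on polynomial trees: a base case, a \emph{grafting} operation matching the rule $P\mapsto P^+$, and a \emph{merging} operation matching the two rules for $P+Q$ and $P\cdot Q$. For the base case, a second-order polynomial $(l,n)\mapsto p(n)$ with $p\in\NN[X]$ is described by the single-node tree whose only node is $p$, regarded as an element of $\NN[X_0]$: by definition this leaf is assigned exactly $(l,n)\mapsto p(n)$.

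For the rule $P\mapsto P^+$, let $T$ be a description of $P$, which exists by the induction hypothesis. Adjoin a fresh root carrying the polynomial $X_1\in\NN[X_0,X_1]$ and having the root of $T$ as its unique child. Since the child is assigned $P$, the assignment procedure gives the new root the second-order polynomial $(l,n)\mapsto X_1\big|_{X_0=n,\,X_1=l(P(l,n))}=l(P(l,n))=P^+(l,n)$. Note that there is no apparent way to absorb the outer application of $l$ into the old root polynomial, which is why a fresh node is needed here.

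For the rules $P\mapsto P+Q$ and $P\mapsto P\cdot Q$, let $T_P$ and $T_Q$ be descriptions of $P$ and $Q$. Say the root of $T_P$ carries $p\in\NN[X_0,\ldots,X_j]$ and its children subtrees are assigned $P_1,\ldots,P_j$, so that $P=p(\,\cdot\,,P_1^+,\ldots,P_j^+)$; similarly the root of $T_Q$ carries $q\in\NN[X_0,\ldots,X_m]$ with children subtrees assigned $Q_1,\ldots,Q_m$ (the degenerate cases $j=0$ or $m=0$, where $T_P$ or $T_Q$ is a leaf, are allowed). Build a new tree with a fresh root whose children are, in order, the $j$ children subtrees of the root of $T_P$ followed by the $m$ children subtrees of the root of $T_Q$ — so the roots of $T_P$ and $T_Q$ themselves are discarded and the new root has $j+m$ children — and let this new root carry $p(X_0,X_1,\ldots,X_j)+q(X_0,X_{j+1},\ldots,X_{j+m})\in\NN[X_0,\ldots,X_{j+m}]$ for the sum, and the analogous product for $P\cdot Q$. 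Unwinding the assignment procedure, the new root is assigned $(l,n)\mapsto p(n,P_1^+,\ldots,P_j^+)+q(n,Q_1^+,\ldots,Q_m^+)=P(l,n)+Q(l,n)$, and correspondingly $P(l,n)\cdot Q(l,n)$ in the product case, as required. Nothing here is delicate: the only point requiring care is the bookkeeping of variable indices when relabelling and combining the two root polynomials, which rests only on the trivial fact that $\NN[X_0,\ldots,X_k]$ is closed under variable renamings, sums and products. So I do not expect a genuine obstacle — the write-up should be a handful of lines, which is presumably why the paper calls it an easy structural induction.
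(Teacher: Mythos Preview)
Your proposal is correct and follows essentially the same structural induction as the paper: a single-node tree for the base case, merging the two roots into one (with appropriately re-indexed variables) for sums and products, and adjoining a fresh root labelled $X_1$ for the $P\mapsto P^+$ step. Your bookkeeping of the variable $X_0$ in the merging step is in fact slightly cleaner than the paper's, which shifts all variables of $t_Q$ including $X_0$; otherwise the arguments are identical.
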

    \begin{proof}
      For the base case note that the a description consisting of a single node $p\in\NN[X_0]$ is a description of the second-order polynomial $(l,n)\mapsto p(n)$.

      To obtain a description of the point-wise sum $P+Q$ from descriptions of $P$ and of $Q$, let $t_P\in\NN[X_0,\ldots,X_k]$ be the polynomial at the root of $P$s description and $t_Q\in\NN[X_0,\ldots,X_{m}]$ the polynomial at the root of $Q$s description.
      A description of $P+Q$ is given by merging the root of the two descriptions to a node labeled with the polynomial
      \[ \tilde t(X_0,\ldots,X_{k+m+1}) := t_P(X_0,\ldots,X_k) + t_Q(X_{k+1},\ldots,X_{k+m+1}). \]

      For the point-wise product replace $t_P+t_Q$ in the above procedure by $t_P\cdot t_Q$.

      Finally note that if $P$ is a second order polynomial and $T$ a description of $P$ then adding a single node containing the polynomial $X_1$ above the root of $T$ is a description of $P^+$.
    \end{proof}

    This enables us to close a gap in the previous section:
    \begin{proposition}\label{resu:compositions}
      Whenever $P$ and $Q$ are second-order polynomials, then so are
      \[ (l,n)\mapsto P(Q(l,\cdot),n) \quad\text{and}\quad (l,n)\mapsto P(l,Q(l,n)). \]
    \end{proposition}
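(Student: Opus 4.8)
The plan is to prove both statements by structural induction on $P$, reading the structure of $P$ off a description of it (which exists by the preceding lemma), so that each inductive step amounts to a manipulation of polynomial trees. The decisive point is that the \emph{second} statement has to be established first, because it gets used inside the inductive proof of the first one.

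Before the inductions I would isolate one elementary closure property: if $Q_1,\dots,Q_k$ are second-order polynomials and $p\in\NN[X_1,\dots,X_k]$, then $(l,n)\mapsto p(Q_1(l,n),\dots,Q_k(l,n))$ is again a second-order polynomial. This is immediate because constant polynomials are second-order polynomials and the class is closed under $+$ and $\cdot$; a description of the result is assembled from descriptions of $Q_1,\dots,Q_k$ by the same root-merging (shifting the application indeterminates of the various subtrees apart while keeping the numeric indeterminate $X_0$ shared) that is used in the proof that every second-order polynomial has a description. Now set $R(l,n):=P(l,Q(l,n))$ and induct on $P$. If $P=(l,n)\mapsto p(n)$ then $R(l,n)=p(Q(l,n))$ is a second-order polynomial by the closure property. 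If $P=P_1+P_2$ or $P=P_1\cdot P_2$, use the induction hypothesis together with closure under $+$ resp.\ $\cdot$. If $P=P_1^{+}$ then $R(l,n)=l\big(P_1(l,Q(l,n))\big)=S^{+}(l,n)$, where $S(l,n):=P_1(l,Q(l,n))$ is a second-order polynomial by the induction hypothesis; hence so is $R=S^{+}$.

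For the first statement I set $R(l,n):=P(Q(l,\cdot),n)$ and again induct on $P$. The cases $P=(l,n)\mapsto p(n)$, where $R=P$, and $P=P_1+P_2$ and $P=P_1\cdot P_2$, are trivial or follow at once from the induction hypothesis. In the remaining case $P=P_1^{+}$, write $l'$ for the function $m\mapsto Q(l,m)$. Then
\[ R(l,n)=P_1^{+}(l',n)=l'\big(P_1(l',n)\big)=Q\big(l,\,P_1(Q(l,\cdot),n)\big)=Q\big(l,S(l,n)\big), \]
where $S(l,n):=P_1(Q(l,\cdot),n)$ is a second-order polynomial by the induction hypothesis. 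But $(l,n)\mapsto Q(l,S(l,n))$ is an instance of the second statement, already proven, with $Q$ and $S$ in the roles of $P$ and $Q$; so $R$ is a second-order polynomial, completing the induction.

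The step I expect to be the real obstacle is precisely this interdependence. The $P_1^{+}$-case of the first statement unavoidably produces a term of the shape ``a fixed second-order polynomial with a second-order polynomial substituted into its numeric argument'', which the induction hypothesis of the first statement does not supply; only the second statement does. Hence one must prove the second statement on its own first, and everything then fits. The remaining work is bookkeeping one does not want to spell out in detail: checking that all constructed objects are genuinely functions $\NN^\NN\times\NN\to\NN$ and that each merged polynomial tree is well-formed, i.e.\ that the polynomial at every node carries exactly as many indeterminates as that node has children.
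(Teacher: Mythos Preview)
Your proof is correct. The inductions over the original closure-rule definition of second-order polynomials go through exactly as you say, and you are right that the $P_1^{+}$-case of the first statement forces one to have the second statement already available.

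The paper, however, takes a different route. It works directly with descriptions (polynomial trees) rather than with the $+,\cdot,(\,\cdot\,)^{+}$ induction: for $(l,n)\mapsto P(l,Q(l,n))$ one grafts a copy of a description of $Q$ onto every leaf of a description of $P$ (composing the leaf polynomial with the root polynomial of $Q$), and for $(l,n)\mapsto P(Q(l,\cdot),n)$ one replaces every edge of a description of $P$ by a copy of a description of $Q$, appending the subtree below that edge to each leaf of the inserted copy. This is precisely the ``more elegant proof'' the paper promises after noting that the lemma can also be done by a ``tedious but straight-forward induction on the term structure''; your argument is that induction. What your version buys is an explicit account of why the two closure statements are entangled (the second feeds the $^{+}$-step of the first), something the paper's tree surgery hides inside the phrase ``a copy of the part of the description of $P$ below the edge is appended to each leaf''. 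What the paper's version buys is a single global picture of the resulting description, at the cost of leaving the reader to verify that the grafted trees are well-formed. Incidentally, your opening remark about ``reading the structure of $P$ off a description'' is a slight mismatch with what you then do: your cases are the four closure rules, not the node-by-node structure of a polynomial tree, so the appeal to descriptions is unnecessary for your argument.
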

    \begin{proof}
      A description of the latter can be specified by replacing each leaf $p$ of a description of $P$ with a description of $Q$ where the root $t$ of $Q$ is replaced by $p\circ t$.
      For the former one each edge of in a description of $P$ has to be replaced with a description of $Q$ (where a copy of the part of the description of $P$ below the edge is appended to each leaf of the description of $Q$ and there are compositions again in the roots and the leafs).
    \end{proof}

    It would be desirable to find a distinguished description for each second-order polynomial.
    This would be a normal form theorem for second-order polynomials and in particular to make recursive definitions independent of the specific description and provide information about the second-order polynomial itself.
    The extend of ambiguity in descriptions is closely connected to injectivity of the mapping $P\mapsto P^+$.
    It seems to be impossible to use descriptions for the formulation of a normal for theorem unless injectivity holds.
    Some authors go as far as restricting to strictly monotone functions to force injectivity of functional application \cite{MR3219039}.

  \subsection{Polynomial majorants}\label{sec:polynomial majorants}
    As an example of a quantity that is well-defined on descriptions and of use later in the paper consider the following:
    \begin{definition}
      A pair $(N,p)$ of a natural number $N\in\NN$ and a function $p\colon\NN\to\NN$ is called a \demph{majorant} of a second order polynomial $P$ if $p(n)\geq n$ and there exists a description $T$ of $P$ such that
      \begin{itemize}
        \item $N$ is the height of the tree $T$.
        \item For each integer $n$ and each node $t$ of the tree $T$ it holds that $p(n)\geq t(n,\ldots,n)$.
      \end{itemize}
      A majorant is called a \demph{polynomial majorant} if $p$ is a polynomial.
    \end{definition}
    It is clear that each description of a second-order polynomial can be used to obtain a unique majorant by taking the minimal function that works for this description.
    A polynomial majorant can be constructed from a description choosing the coefficients of $p$ as maximum of the coefficients of the polynomials that arise from the nodes of the description by setting each of the variables to $n$.
    Since each second-order polynomial has a description.
    This proves:
    \begin{lemma}\label{resu:existence of a polynomial majorant}
      Any second-order polynomial has a polynomial majorant.
    \end{lemma}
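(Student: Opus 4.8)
The plan is to unwind the definition of a majorant directly from a description. By the lemma that every second-order polynomial has a description, fix a description $T$ of the given second-order polynomial $P$; from $T$ we read off a polynomial majorant. The number $N$ in the pair is forced: take it to be the height of the tree $T$. The only real content is to produce a single polynomial $p$ with $p(n)\geq n$ and, simultaneously, $p(n)\geq t(n,\ldots,n)$ for every node $t$ of $T$.

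To build $p$, I would exploit finiteness of $T$. Since $T$ has only finitely many nodes and each node is a polynomial $t\in\NN[X_0,\ldots,X_k]$ with natural-number coefficients, each diagonal evaluation $n\mapsto t(n,\ldots,n)$ is a univariate polynomial with non-negative integer coefficients. Let $d$ be the maximum of the degrees of these diagonal polynomials over all nodes of $T$, and for each $0\leq j\leq d$ let $c_j$ be the maximum, over all nodes $t$, of the coefficient of $n^j$ in $t(n,\ldots,n)$. Define $p(n) := n + \sum_{j=0}^{d} c_j n^j$, which is a polynomial mapping $\NN$ to $\NN$.

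It then remains to check the two bullet points of the definition of majorant against this description $T$. The inequality $p(n)\geq n$ holds by construction because of the extra summand $n$. For any node $t$, the diagonal polynomial $t(n,\ldots,n)$ has, in each degree $j$, a coefficient at most $c_j$, and since every coefficient in sight is non-negative this yields $t(n,\ldots,n)\leq \sum_{j=0}^{d} c_j n^j \leq p(n)$ for all $n\in\NN$. Hence $(N,p)$ is a majorant of $P$ witnessed by the description $T$, and since $p$ is a polynomial it is a polynomial majorant, which is exactly the claim.

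I do not expect a serious obstacle in this argument; it is essentially the bookkeeping already sketched in the paragraph preceding the statement. The two points that warrant a moment of care are that $p$ must dominate all nodes \emph{uniformly} — handled by taking coefficient-wise maxima over the finitely many nodes rather than a per-node choice — and that the summand $n$ is genuinely needed to force $p(n)\geq n$ in degenerate cases such as $P$ being identically zero, where the diagonal polynomials at all nodes could otherwise be the zero polynomial.
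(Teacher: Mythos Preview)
Your proposal is correct and follows essentially the same route as the paper: fix a description, take $N$ to be its height, diagonalize each node's polynomial, and take coefficient-wise maxima to obtain $p$. Your explicit addition of the summand $n$ to guarantee $p(n)\geq n$ in degenerate cases is a small refinement the paper glosses over, but otherwise the arguments coincide.
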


    The following is the reason for the name \lq majorant\rq:
    \begin{lemma}\label{resu:property of a polynomial majorant}
      Let $(N,p)$ be a majorant of a second-order Polynomial $P$.
      Define a sequence of functions $p_i:\omega^\omega\times\omega\to\omega$ recursively by
      \[ p_0(l,n):= p(n)\quad\text{and}\quad p_{i+1}(l,n):=p(\max\{n,l(p_{i}(n))\}). \]
      Whenever $l:\omega\to\omega$ is monotone and $n\in\omega$ is arbitrary it holds that
      \[ P(l,n) \leq p_N(l,n). \]
    \end{lemma}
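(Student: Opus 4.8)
The plan is to argue by induction along a description of $P$, after first bringing that description into a normal form in which all of its leaves lie at the common depth $N$. Concretely, I would fix a description $T$ of $P$ of height $N$ for which the conditions defining a majorant hold, and then, for each leaf $\lambda$ of $T$ at some depth $d<N$ carrying a polynomial $q_{\lambda}\in\NN[X_0]$, replace $\lambda$ by a path of length $N-d$ whose top node carries $q_{\lambda}$ (now viewed as an element of $\NN[X_0,X_1]$ that does not depend on $X_1$), whose intermediate nodes carry the zero polynomial of $\NN[X_0,X_1]$, and which ends in a leaf carrying the zero polynomial of $\NN[X_0]$. Since the second-order polynomial assigned to a node is determined by that node's polynomial together with the second-order polynomials assigned to its children, the zero polynomials propagate the value $0$ up the path and the top polynomial ignores that value, so the top node still receives $(l,n)\mapsto q_{\lambda}(n)$; hence the assignment at the root is unchanged and the modified tree $T'$ is again a description of $P$, of height $N$, but now with every leaf at depth exactly $N$. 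The newly introduced nodes carry $q_{\lambda}$ or $0$, which at the all-$n$ argument evaluate to $q_{\lambda}(n)\le p(n)$ resp.\ $0\le p(n)$, so $(N,p)$ is still a majorant, now witnessed by $T'$.

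Next I would prove, by induction on $N-d$, that every node $t$ of $T'$ at depth $d$, with assigned second-order polynomial $Q_t$, satisfies $Q_t(l,n)\le p_{N-d}(l,n)$ for every monotone $l$ and every $n$. For a leaf ($N-d=0$) this reads $Q_t(l,n)=q_t(n)\le p(n)=p_0(l,n)$ and is exactly the majorant property. For a node $t$ at depth $d<N$ with children $t_1,\dots,t_k$, all at depth $d+1$, one has
\[ Q_t(l,n)=t\bigl(n,\,l(Q_{t_1}(l,n)),\dots,l(Q_{t_k}(l,n))\bigr), \]
and combining the induction hypothesis $Q_{t_j}(l,n)\le p_{N-d-1}(l,n)$ with monotonicity of $l$ and with the fact that $t$ has coefficients in $\NN$ and is therefore monotone in each of its arguments, one obtains, writing $M:=\max\{n,\,l(p_{N-d-1}(l,n))\}$,
\[ Q_t(l,n)\le t\bigl(n,\,l(p_{N-d-1}(l,n)),\dots,l(p_{N-d-1}(l,n))\bigr)\le t(M,\dots,M)\le p(M)=p_{N-d}(l,n), \]
where the third inequality is once more the majorant property, applied at the integer $M$. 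Evaluating this at the root, which has depth $0$ and is assigned $P$, yields $P(l,n)\le p_N(l,n)$.

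The computations in the second step are routine monotonicity bookkeeping; the step that has to be handled with some care is the normalization in the first step. Without it the induction would only deliver bounds of the form $Q_t(l,n)\le p_{h(t)}(l,n)$ with $h(t)$ the height of the subtree below $t$, and merging the contributions of children of unequal heights would force the sequence $(p_i)_i$ to be pointwise nondecreasing in $i$. This does hold once $p$ is monotone (an easy induction using that $l$ and $p$ are monotone), but the definition of a majorant does not require $p$ to be monotone, and, since enlarging $p$ enlarges every $p_i$, one cannot simply replace $p$ by its monotone hull. Normalizing the description removes the difficulty altogether, which is why I would take that route.
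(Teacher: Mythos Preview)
Your argument is correct, and it differs from the paper's in a way worth noting. The paper does a direct induction on the height of the witnessing description: for a root with children of heights $n_1,\dots,n_k<N$ it obtains $Q_j(l,n)\le p_{n_j}(l,n)$ from the induction hypothesis, then invokes the claim that $p_i(l,n)\le p_{i+1}(l,n)$ for monotone $l$ to upgrade each bound to $p_{N-1}(l,n)$ before finishing exactly as you do. You instead normalize the description so that all leaves sit at depth $N$, which makes every child of a depth-$d$ node have subtree-height exactly $N-d-1$, so the induction hypothesis already gives the uniform bound $p_{N-d-1}$ and no monotonicity of $i\mapsto p_i$ is needed.

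Your route buys something real. As you observe, the claim $p_i\le p_{i+1}$ requires $p$ to be monotone; a non-monotone $p$ with $p(n)\ge n$ can make $p_1<p_0$ (for instance $p(3)=3$, $p(m)=5$ for $m\le 4$, $m\ne 3$, $p(m)=m$ for $m\ge 5$, with $l\equiv 3$ and $n=0$). The lemma is stated for arbitrary majorants, not just polynomial ones, so the paper's step is only justified once one adds the monotonicity of $p$---which holds for polynomial majorants and is all that is used later, but is not part of the hypothesis as written. Your normalization closes this gap cleanly and costs nothing: the padding with zero polynomials preserves both the assigned second-order polynomial at the root and the majorant inequalities at every node.
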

    \begin{proof}
      The proof proceeds by induction over the height of the description witnessing that $(N,p)$ is a polynomial majorant.

      For height $0$ the second order polynomial is of the form $(l,n)\mapsto q(n)$ for some polynomial $q$.
      By the assumption that $(N,p)$ is a polynomial majorant of $P$ it follows that
      \[ P(l,n) = q(n) \leq p(n) = p_0(l,n). \]

      Next assume that the statement has been proven for all descriptions of height $n<N$.
      Note that each of the $k$ children of the root can be regarded as a root of a description $T_k$ of a second-order polynomial $Q_k$.
      Each $T_k$ is a proper subtree of $T$, thus its height $n_k$ is strictly smaller than $Q_k$.
      From the induction hypothesis it follows that for all $l:\omega\to\omega$ and $n\in\omega$
      \[ Q_k(l,n)\leq p_{n_k}(l,n). \]
      Let $q$ be the polynomial at the root of $T$.
      Thus,
      \[ P(l,n) = q(n,l(Q_1(l,n)),\ldots,l(Q_k(l,n))) \]
      First note that for all monotone $l$ it holds that $p_i(l,n)\leq p_{i+1}(l,n)$.
      Since $(N,p)$ is a polynomial majorant of $P$ it holds that $p(m)\geq q(m,\ldots,m)$.
      Therefore, under the assumption that $l$ is monotone, it holds that
      \begin{align*}
        P(l,n) &\leq q(n,l(p_{N-1}(l,n)),\ldots,l(p_{N-1}(l,n))) \\ & \leq p(\max\{n,l(p_{N-1}(l,n))\}) \\ & = p_N(l,n).
      \end{align*}
      This proves the assertion.
    \end{proof}

  \subsection{Relativization}\label{sec:relativization}
    Second-order complexity theory usually only considers total functionals.
    However, the application we are most interested in is real complexity theory, which stems from computable analysis.
    In computable analysis, computations on continuous structures are carried out by encoding the objects by string functions.
    The mappings that assigns a \lq code\rq\ or \lq name\rq\ to the element it encodes are called representations.
    Computations on the space are then done by operating on the names instead.
    In this process, partial functionals are used.
    Recall the most basic notions from computable analysis.
    \begin{definition}
      A \demph{representation} $\xi$ of a space $X$ is a partial surjective mapping $\xi:\B\to X$.
    \end{definition}
    An element of $\xi^{-1}(x)$ is called a \demph{$\xi$-name} of $x$ or just a \demph{name}, if the representation is clear from the context.
    A pair $\XX=(X,\xi_{\XX})$ of a set and a representations of that set is called a represented space.

    Computations on represented spaces are carried out by operating on names:
    \begin{definition}
      Let $f:\XX\to \YY$ be a function between represented spaces.
      A partial functional $F:\subseteq\B\to\B$ is called a \demph{realizer} of $f$ if it translates $\xi_{\XX}$-names of $x$ to $\xi_{\YY}$-names of $f(x)$, that is if
      \[ \forall \varphi\in\dom(\xi_\XX): \xi_{\YY}(F(\varphi)) = f(\xi_{\XX}(\varphi)). \]
    \end{definition}
    A function is called \demph{computable} if it has a computable realizer.
    Here it is tradition not to make any assumptions about the behavior of the realizer outside of the domain of $\xi_{\XX}$.
    In particular the domain the domain of the realizer may be bigger than the domain of the representation and it may not have a total computable extension.

    Since we used the characterization by Kapron and Cook, it is possible to straightforwardly relax the definition of polynomial-time computability in an appropriate way.
    \begin{definition}
      Let $A\subseteq \B$.
      We say that an oracle Turing machine $M^?$ runs in \demph{$A$-restricted poly\-no\-mi\-al-time} if there exists a second-order polynomial $P$ such that for each oracle $\varphi$ from $A$ and string $\str a$ the computation of $M^\varphi(\str a)$ takes at most $P(\flength\varphi,\length{\str a})$ steps.
      I.e.
      \[ \forall \varphi\in A,\forall \str a\in\Sigma^*\colon \Time_{M^\varphi}(\str a)\leq P(\length{\varphi},\length{\str a}). \]
      We denote the set of functionals $F\colon A\to \B$ such that there is a machine computing $F$ in $A$-restricted polynomial time by $\operatorname P(A)$.
    \end{definition}
    Note that the requirement on $M^?$ in this definition has been weakened from having a polynomial running time (compare to \eqref{eq:RT}) by replacing the quantifier $\forall\varphi\in\B$ by a quantifier $\forall\varphi \in A$.
    
    Here are two examples of this definition covertly showing up in literature:
    \begin{example}[relativization]
    Oracle machines are used in classical complexity theory to talk about polynomial-time computability of a string function $\varphi\colon\Sigma^*\to\Sigma^*$ relative to some oracle $\psi:\Sigma^*\to \{\sdzero,\sdone\}$ interpreted as a subset of the strings.
    Under the assumption that $\psi$ only retruns $\sdzero$ or $\sdone$, one can check that the following are equivalent:
    \begin{itemize}
       \item  $\varphi$ is polynomial-time computable relative to $\psi$.
       \item The constant functional returning $\varphi$ is $\{\psi\}$-restricted polynomial-time computable.
     \end{itemize}
    This is the reason for the name of this chapter and remains true as long as $\psi$ has at most polynomial length.
    \end{example}
    
    The second example is Kawamura and Cook's framework for complexity for operators in analysis.
    Recall that Kawamura and Cook introduce the following subclass of Baire space:
    \begin{definition}[\cite{Kawamura:2012:CTO:2189778.2189780}]
      A string function $\varphi\in\B$ is called \demph{length-monotone} if for all strings $\str a$ and $\str b$ it holds that $\length{\str a} \leq \length{\str b}$ implies $\length{\varphi(\str a)}\leq \length{\varphi(\str b)}$.
      The set of all length-monotone string functions is denoted by $\reg$.
    \end{definition}
    Polynomial-time computability of functionals from $\reg$ to $\reg$ is then defined as $\reg$-restricted polynomial-time computability.
    (Of course it is not referred to by this name, but the definitions are identical.)
    Real complexity theory usually considers representations whose domains are included in the length-monotone string functions and regards a function between spaces that are equipped with such representations to be polynomial-time computable if it has a realizer that is polynomial-time computable in the above sense.

    The tradition in second-order complexity theory is to impose the running time requirement independently of the domain of the functional.
    \begin{definition}
      For $A\subseteq \B$ denote the class of all functionals $F\colon A\to \B$ that have a polynomial-time computable extension to all of Baire space by $\operatorname P|_A$.
    \end{definition}

    For a partial functional $F\colon A\to \B$ there are now two approaches to define polynomial-time computability.
    On one hand one could require that $F$ is $A$-restricted poly\-nomial-time computable, i.e., $F\in \operatorname P(A)$.
    On the other hand one could use the more restrictive definition that $F$ has a total polynomial-time computable extension, i.e., $F\in \operatorname P|_A$.
    The first definition follows the tradition of computable analysis, where no assumptions about a realizer are made outside of the domain of the representation on the input side of the operator.
    The second definition is in the tradition of second-order complexity theory, where one usually only considers polynomial-time computability of total functionals.
  \subsection{Incompatibility with relativization}
    Of course, the above distinction only makes sense if the classes $\operatorname P(A)$ and $\operatorname P|_A$ differ in general.
    Note that by definition $\operatorname P(A)\supseteq \operatorname P|_A$.
    Before we give the example that separates these classes, let us discuss why this result is not obvious.    
    The basic idea is to consider the length function on the string functions.
    Any oracle machine that computes this function takes a minimum of $2^n$ steps on any input of length $n$  and arbitrary oracle, as each query of length $n$ has to be asked to guarantee correctness of the return value.
    On the other hand, the brute-force search computes the length function in about $2^{n}l(n)$ time steps.
    This means, that the length function becomes $A$-restricted polynomial-time computable if $A$ is chosen as the set of string functions that have at least exponential length.
    
    Why does this not provide a counterexample already?
    Unfortunately, the brute force search can be modified to detect names of subexponential length and abort the computation in time.
    Informally such an algorithm can be described as follows:
    \lq Do a brute-force search, but abort as soon as you have to ask more than twice as many oracle queries as the length of the biggest return value you have found so far\rq.
    Such a machine does indeed compute the restriction of the length function on the exponentially growing functions while running in polynomial time for all inputs and returning something that differs from the length on the shorter functions (this is allowed since they are not in the domain).

    Thus, the argument has to be more elaborate.
    Our solution is to delay the time until a big input is provided:
    The elements of $A$ are only required to exhibit exponential growth on a sparse subset, i.e. $\length{\varphi}(g(n)) \geq 2^{g(n)}$, where $g$ is a fast growing function.
    Note that if $g$ does not grow fast enough, the trick above does still work.
    For instance for $g(n)=2^n$, the following algorithm still works:
    \lq Do a brute-force search but abort as soon as you have to ask more queries than the square of the biggest return value you have found so far\rq.
    If $g$ grows too fast the $A$-restricted polynomial-time computability may break down.

    Fortunately the choice $g(n)=2^{2^n}$ is a sweet spot:
    On one hand, due to the availability of length function iteration, it is still possible to use a second-order polynomial to extract a super exponential function from an element of the set therefore to make the brute-force algorithm work in $A$-restricted polynomial-time.
    On the other hand the above approach to compute a total extension does not work anymore and it becomes provable that no polynomial-time computable extension exists.

    \begin{theorem}[in general $\operatorname P|_A\subsetneq\operatorname{P}(A)$]\label{resu:failure of relativization for polytime}
      There exist a set $A\subseteq \B$ and a functional $F:A\to \B$ such that $F$ is $A$-restricted polynomial-time computable but has no total polynomial-time computable extension.
    \end{theorem}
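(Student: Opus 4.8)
The plan is to make precise the strategy sketched just before the statement. Fix the fast‑growing function $g(n):=2^{2^n}$, so that $g(n+1)=g(n)^2$ and $g(n+1)/g(n)=g(n)\to\infty$, and set
\[ A:=\{\varphi\in\B\mid \forall n\in\NN\colon \flength{\varphi}(g(n))\geq 2^{g(n)}\},\qquad F(\varphi)(\str a):=\bin{\flength{\varphi}(\length{\str a})}. \]
So $F\colon A\to\B$ is a binary encoding of the operator sending a string function to its own size function. Two easy observations are used throughout: membership in $A$ is preserved when $\varphi$ is lengthened at a single string, and the pointwise least element of $A$ is the function $\varphi_{\min}$ with $\flength{\varphi_{\min}}(k)=2^{g(n)}$ on $g(n)\leq k<g(n+1)$.

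To show $F\in\operatorname P(A)$ I would take the naive machine that on input $\str a$ of length $m$ queries all strings of length $\leq m$, keeps the length $v$ of the longest answer, and outputs $\bin v$. Its running time is $O(2^m\cdot\flength{\varphi}(m))$ plus $O(\flength{\varphi}(m))$ for the output. For $\varphi\in A$ this is second‑order polynomial in $\flength{\varphi}$: there is a $g$‑point in $[m,m^2)$ (take the least $g(n)\geq m$; then $g(n)=g(n-1)^2<m^2$), so $\flength{\varphi}(m^2)\geq 2^{g(n)}\geq 2^m$, and monotonicity gives $2^m\cdot\flength{\varphi}(m)\leq\flength{\varphi}(m^2)^2$. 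Since $(l,n)\mapsto l(n^2)^2$ is a second‑order polynomial, this machine witnesses $F\in\operatorname P(A)$. Here one uses that $g$ is slow enough that a single $P\mapsto P^+$ step reaches a $g$‑point.

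For $F\notin\operatorname P|_A$ I would assume a machine $M^?$ computing a total $\tilde F\supseteq F$ with second‑order polynomial running time $Q$, take a polynomial majorant $(N,q)$ of $Q$ (it exists by \Cref{resu:existence of a polynomial majorant} and bounds $Q(l,m)\leq q_N(l,m)$ for monotone $l$ by \Cref{resu:property of a polynomial majorant}), and fix $d,C$ with $q(x)\leq Cx^d$ for $x\geq 1$. Choose $n_1$ so large that $g(n_1)>d$ and $d\,g(n_1)+\log_2 C+1<2^{g(n_1)}$, and let $\psi$ be the \emph{capped} variant of $\varphi_{\min}$ that agrees with $\varphi_{\min}$ on strings of length $\leq g(n_1)$ and is empty otherwise; then $\flength{\psi}$ is constantly $2^{g(n_1)}$ on all arguments $\geq g(n_1)$. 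Take $\str a$ of length $m$ with $d\,g(n_1)+\log_2 C+1\leq m<2^{g(n_1)}$. Since $\flength{\psi}$ is eventually constant and $m<2^{g(n_1)}$, the recursion defining $q_N$ stabilises at $q(2^{g(n_1)})$ after one step, whence
\[ Q(\flength{\psi},m)\leq q_N(\flength{\psi},m)=q\!\left(2^{g(n_1)}\right)\leq C\cdot 2^{d\,g(n_1)}<2^{m+1}-1. \]
So $M^\psi$ on input $\str a$ makes fewer than $2^{m+1}-1$ queries and misses some string $\str b_0$ with $\length{\str b_0}\leq m$. Finally I would build $\varphi\in A$ by lengthening $\psi$ at $\str b_0$ so that $\flength{\varphi}(m)$ takes a value with $\bin{\flength{\varphi}(m)}\neq M^\psi(\str a)$, and, for every $n>n_1$, lengthening $\psi$ to $2^{g(n)}$ at one string of length exactly $g(n)$ that is neither $\str b_0$ nor queried by $M^\psi$ on $\str a$; there is room because at level $n>n_1$ there are $2^{g(n)}\geq 2^{g(n_1+1)}=2^{g(n_1)^2}$ candidate strings, far more than the $\leq C\cdot 2^{d\,g(n_1)}$ queried ones since $g(n_1)>d$. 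Then $\varphi$ agrees with $\psi$ on all queries $M^\psi$ makes on $\str a$, so $M^\varphi(\str a)=M^\psi(\str a)\neq\bin{\flength{\varphi}(m)}=F(\varphi)(\str a)$ although $\varphi\in A$, contradicting that $M^?$ computes $\tilde F$.

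The main obstacle is this last paragraph, and in particular the tension between its two halves: the capping trick only bounds $Q$ along $\psi$ by $2^{O(g(n_1))}$, hence only starves $M$ for inputs of length $m<2^{g(n_1)}$, whereas repairing $\varphi$ to an element of $A$ requires, already at level $n_1+1$, more than $C\cdot 2^{d\,g(n_1)}$ unqueried strings of length at most $g(n_1+1)$. This is available exactly because $g(n_1+1)/g(n_1)=g(n_1)$ eventually exceeds every fixed $d$; for a slower $g$ such as $g(n)=2^n$ the inequality fails once the majorant has degree $\geq 2$, matching the remark that there an aborting brute‑force search already yields a total polynomial‑time extension.
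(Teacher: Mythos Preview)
Your proposal is correct. The set $A$ and the functional $F$ (up to the cosmetic choice of binary versus unary output) agree with the paper's, and your argument for $F\in\operatorname P(A)$ is the same brute-force idea, with the slightly sharper observation that a single application $l(n^2)$ already dominates $2^n$ on $A$ (the paper uses $l(l(3(n+2)))$ instead).

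Where you genuinely diverge from the paper is in the impossibility direction. The paper builds an \emph{adaptive} oracle $\varphi_n$ by running the machine on input $\sdzero^{2^{2^n}-1}$ and withholding the long value at each level $2^{2^m}$ until every query of that length has been asked; it then caps $\varphi_n$ to $\psi_n$ and derives a \emph{time} contradiction, showing $\Time_{M^{\psi_n}}$ exceeds $P(\flength{\psi_n},\cdot)$. You instead start from a fixed capped oracle $\psi$, invoke the polynomial-majorant bound (\Cref{resu:existence of a polynomial majorant}, \Cref{resu:property of a polynomial majorant}) to cap the number of queries $M^\psi$ can make, and then \emph{repair} $\psi$ to an element $\varphi\in A$ that agrees with $\psi$ on all queried strings but has a different size at $m$, obtaining a \emph{correctness} contradiction. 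Your route is shorter and avoids the iterative limit construction; its cost is that it leans on the majorant machinery from \Cref{sec:polynomial majorants}, whereas the paper's argument is more self-contained (it only uses inline that $P(l,k)\le\max\{p(m),p(k)\}$ for bounded monotone $l$). One small point to make explicit in a write-up: when you set $\flength\varphi(m)$ via $\str b_0$, the additions at levels $n>n_1$ with $g(n)\le m$ may already contribute to $\flength\varphi(m)$, so you should phrase the choice as picking $\length{\varphi(\str b_0)}$ large enough that $\flength\varphi(m)$ avoids the single value encoded by $M^\psi(\str a)$; this is immediate since you have infinitely many values $\ge V_0$ available.
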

    \begin{proof}
      Consider the set
      \[ A:= \{\varphi\in\B\mid \forall n\in\omega:\flength{\varphi}(2^{2^n})\geq 2^{2^{2^n}}\} \]
      and the functional on $A$ defined by
      \[ F:A\to\B,\quad F(\varphi)(\str a) := \sdzero^{\flength{\varphi}({\length{\str a}})}. \]
      $F$ is $A$-restricted polynomial-time computable.
      To see that this is true first note that $3(n+2)\geq 2^{2^{\lceil\lb(\lb(3(n+2))\rceil -1}}\in\NN$ and $\lb(n)^2\leq 3(n+2)$ (this is implied by the inequality $\ln(x)\leq \frac{x-1}{\sqrt x}$).
      Thus, for $\varphi\in A$ it holds that
      \begin{align*}
        \flength{\varphi}(\flength\varphi(3(n+2)))& \geq \flength{\varphi}(\flength{\varphi}(2^{2^{\lceil\lb(\lb(3(n+2)))\rceil-1}}))\geq \flength\varphi(2^{2^{2^{\lceil\lb(\lb(3(n+2)))\rceil-1}}}) \\
        & \geq 2^{2^{2^{2^{\lceil\lb(\lb(3(n+2)))\rceil-1}}}} \geq 2^{2^{\sqrt{3(n+2)}}}\geq 2^n
      \end{align*}
      This means that a second-order polynomial provides sufficient time to find the value of $\flength{\varphi}(\length{\str a})$ in $A$-restricted polynomial time using a brute-force search.

      However, $F$ does not have a total polynomial-time computable extension, as can be seen as follows:
      Towards a contradiction assume that there is an oracle Turing machine $M^?$ that computes such an extension in time bounded by some second-order polynomial $P$.
      For each $n\in\omega$ define an oracle $\varphi_n\in A$.
      First define a sequence of functions $\varphi_{n,k}\in\B$.
      Let $\varphi_{n,0}$ be the constant function returning $\varepsilon$.
      To recursively define $\varphi_{n,k+1}$ follow the computation $M^{\varphi_{n,k}}(\sdzero^{2^{2^n}-1})$ and whenever a query $\str a$ is asked such that $\lb(\lb(\length{\str a}))$ is an integer, then check whether all other queries of this length have been asked before and were answered with an $\varepsilon$ by $\varphi_{n,k}$.
      If this situation is encountered for some query $\str a_k$, then set $\varphi_{n,k+1}(\str a_k)$ to be the string of $2^{2^{2^m}}$ zeros, for all other strings $\str b$ set $\varphi_{n,k+1}(\str b):=\varphi_{n,k}(\str b)$ and ignore the rest of the computation.
      If such an $\str a_k$ does not exist, then set $\varphi_{n,k+1} := \varphi_{n,k}$.
      The sequence $(\varphi_{n,k})_k$ converges in Baire space, as the sequence $\varphi_{n,k}(\str a)$ is either constantly $\varepsilon$ or jumps to $\sdzero^{2^{\length{\str a}}}$ at some point and remains constant afterwards.
      Let $\tilde \varphi_n$ be the limit.
      Since $M^?$ is a deterministic machine, the computations $M^{\tilde \varphi_n}(\sdzero^{2^{2^n}-1})$ and $M^{\varphi_{n,k}}(\sdzero^{2^{2^n}-1})$ are identical up until the query $\str a_{k+1}$ is done.
      For the computation on oracle $\tilde \varphi_n$ to be finite, the sequence $(\str a_k)_k$ must be finite.
      Let $k_0$ be bigger than the number of elements, then $\tilde \varphi_n=\varphi_{n,k_0}$.
      Let $\varphi_n$ be the function that is identical to $\varphi_{n,k_0}$ unless $\varphi_{n,k_0}$ returns $\varepsilon$ on all inputs of length $2^{2^m}$.
      In this case not all the queries of this length were asked in the run of the machine $M^?$ on oracle $\varphi_{n,k_0}$ and input $\sdzero^{2^{2^n}-1}$.
      Pick one query of length $2^{2^m}$ that was not asked and let $\varphi_n$ return the string of $2^{2^{2^m}}$ zeros on this string.
      This guarantees that $\varphi_n\in A$.

      Let $\psi_n$ be the string function that coincides with $\varphi_n$ on strings of length less or equal $2^{2^{n-1}}$ (and thus also on all strings of length less or equal $2^{2^n}-1$) and returns $\varepsilon$ on bigger strings.
      Since the machine $M^?$ is deterministic for $M^{\varphi_n}(\sdzero^{2^{2^n}-1})$ and $M^{\psi_n}(\sdzero^{2^{2^n}-1})$ to differ it is necessary that an oracle query has been asked such that the answers of $\psi_n$ and $\varphi_n$ are distinct.
      The definition of $\varphi_n$ makes sure that this does not happen before all queries of length $2^{2^n}$ have been posed.
      Each of these queries takes one time step, thus $\Time_{M^{\psi_n}}(\str a) \geq 2^{2^{2^n}}$.
      If the machine runs identically on oracle $\varphi_n$ and oracle $\psi_n$, then it has to ask each query of length ${2^{2^n}-1}$ to correctly compute the length (otherwise we may change the value in the query of length $2^{2^n}-1$ that was not asked).
      Thus, for all $n$
      \[ \Time_{M^{\psi_n}}(\sdzero^{2^{2^n}-1}) \geq 2^{2^{2^n}-1}. \]
      
      By the definition of $\psi_n$ it holds that $\flength{\psi_n}(k)\leq 2^{2^{2^{n-1}}}$ for all $k$.
      Note that whenever $l$ is monotone and bounded by $m\in\omega$, i.e. $l(k)\leq m$ for all $k\in\omega$, then there exists a polynomial $p$ such that
      \[ P(l,k)\leq \max\{p(m),p(k)\}. \]
      Therefore,
      \[ P(\flength{\psi_{n}},2^{2^n}-1) \leq \max\{C2^{d2^{2^{n-1}}},p(2^{2^n}-1)\} \]
      holds for all $n$ and appropriate $C,d\in\omega$.
      Using that $P$ is a running time of $M^?$ and the inequality from above obtain
      \[ 2^{2^{2^n}-1} \leq \operatorname{time}_{M^{\varphi_n}}(\sdzero^{2^{2^n}-1}) = \operatorname{time}_{M^{\psi_n}}(\sdzero^{2^{2^n}-1}) \leq \max\{C2^{d2^{2^{n-1}}},p(2^{2^n}-1)\}. \]
      The maximum on the far right is assumed by the first term only for finitely many $n$: $2^{2^n}-1 \leq d 2^{2^{n-1}} + \lb(C)$ is a quadratic inequality for $x:=2^{2^{n-1}}$ and the set where it is fulfilled can be specified explicitly.
      However, this implies that the left hand side is bounded by a polynomial in $2^{2^n}-1$ which is clearly not the case.
      A contradiction.
    \end{proof}

    This proves that for an arbitrary set $A\subseteq \B$, it can not be expected that every $A$-restricted polynomial-time computable functional has a total polynomial-time computable extension.
    Does this mean that computable analysis uses a model that cannot be described by the usual approach of second-order complexity theory?
    Note that Kawamura and Cook replaced $\operatorname P|_{\reg}$ with $\operatorname P(\reg)$ in their framework.
    However, $\reg$ is far away from being an arbitrary set.

    Recall the following notion:
    \begin{definition}
      Let $A$ be a subset of $\B$.
      A mapping $R\colon \B\to A$ is called a \demph{retraction} of $\B$ onto $A$, if for all $\varphi\in A$ it holds that $R(\varphi) = \varphi$.
    \end{definition}
    A property of $\reg$ that guarantees the existence of total polynomial-time computable extensions is the following:
    \begin{lemma}
      There is a polynomial-time computable retraction from $\B$ onto $\reg$.
    \end{lemma}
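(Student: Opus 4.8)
The plan is to build $R$ by leaving the \emph{contents} of the oracle's values essentially untouched and only rescaling their lengths, exploiting the fact that a length-monotone $\varphi$ assigns equal-length outputs to all strings of a fixed length, and that this common length is already pinned down by the single value $\varphi(\sdzero^n)$, where $\sdzero^n$ denotes the all-zeros string of length $n$. Concretely, for $\varphi\in\B$ put $m_\varphi(n):=\max\{\length{\varphi(\sdzero^k)}\mid k\le n\}$ and let $R(\varphi)(\str a)$ be the word of length $m_\varphi(\length{\str a})$ obtained from $\varphi(\str a)$ by discarding the symbols past position $m_\varphi(\length{\str a})$ if it is too long and appending $\sdzero$'s if it is too short.

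First I would check that $R$ maps into $\reg$: by construction $\length{R(\varphi)(\str a)}=m_\varphi(\length{\str a})$ and $n\mapsto m_\varphi(n)$ is non-decreasing, so $\length{\str a}\le\length{\str b}$ gives $\length{R(\varphi)(\str a)}\le\length{R(\varphi)(\str b)}$. Next I would verify that $R$ fixes $\reg$ pointwise. If $\varphi\in\reg$, then $\length{\varphi(\sdzero^k)}\le\length{\varphi(\sdzero^n)}$ for $k\le n$, so $m_\varphi(n)=\length{\varphi(\sdzero^n)}$; and since length-monotonicity applies in both directions, every $\str a$ with $\length{\str a}=n$ satisfies $\length{\varphi(\str a)}=\length{\varphi(\sdzero^n)}=m_\varphi(n)$. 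Hence the truncation/padding in the definition of $R(\varphi)(\str a)$ does nothing, i.e. $R(\varphi)=\varphi$.

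Finally I would exhibit the evident oracle machine computing $R$ and bound its running time. On oracle $\varphi$ and input $\str a$ of length $n$, it writes $\sdzero^0,\sdzero^1,\dots,\sdzero^n$ on the query tape in turn, queries each, and maintains the running maximum $m=m_\varphi(n)$ of the answer lengths; then it queries $\varphi(\str a)$ and outputs the first $\min\{m,\length{\varphi(\str a)}\}$ symbols of $\varphi(\str a)$ followed by $\max\{0,m-\length{\varphi(\str a)}\}$ copies of $\sdzero$. Writing the probes costs $\bigo(n^2)$ steps, the $\bigo(n)$ queries one step each, scanning the answers and updating the maximum $\bigo(n\cdot\flength\varphi(n))$ steps, and assembling the output $\bigo(\flength\varphi(n))$ steps; with the usual polynomial head-movement overhead this is at most $c\cdot(n^2+n\cdot\flength\varphi(n))$ for some constant $c$, which is a second-order polynomial. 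The one point that needs care — and the reason the construction is not completely trivial — is exactly the issue flagged in the motivating discussion: one cannot afford to renormalize $\varphi$ so that $\length{\varphi(\str a)}$ becomes $\flength\varphi(\length{\str a})$, because evaluating $\flength\varphi(n)$ needs exponentially many queries; probing only along $\sdzero^0,\dots,\sdzero^n$ suffices precisely because for length-monotone oracles those few values already determine the whole length profile, and it is this coincidence that simultaneously makes the retraction property and the polynomial-time bound go through.
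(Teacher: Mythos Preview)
Your argument is correct and follows the same idea as the paper: fix the output length of $R(\varphi)$ on inputs of length $n$ by probing $\varphi$ only at the all-zeros strings, then truncate or zero-pad $\varphi(\str a)$ to that target length. The one difference is that the paper uses the single value $\length{\varphi(\sdzero^{\length{\str a}})}$ as the target, whereas you take the running maximum $m_\varphi(\length{\str a})=\max_{k\le\length{\str a}}\length{\varphi(\sdzero^k)}$. Your extra step is in fact the right thing to do: without it the map need not land in $\reg$ for general $\varphi$ (consider $\varphi(\varepsilon)=\sdzero\sdzero$, $\varphi(\sdzero)=\varepsilon$), so your version patches a small oversight in the paper's formula at the modest cost of $n{+}1$ probes instead of one.
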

    \begin{proof}
      For a string $\str a$ let $\str a^{\leq n}$ denote its initial segment of length $n$ (or the string itself if it has less than $n$ bits).
      Consider the mapping
      \[ R(\varphi)(\str a) := \varphi(\str a)^{\leq \length{\varphi(\sdzero^n)}}\sdzero^{\max\{\length{\varphi(\sdzero^n)}-\length{\varphi(\str a)},0\}}. \]
      This mapping is a polynomial-time computable retraction from $\B$ onto $\Sigma^{**}$.
    \end{proof}

    \begin{theorem}
      Whenever there is a polynomial-time computable retraction from $\B$ onto $A$, then any $A$-restricted polynomial-time computable functional has a total polynomial-time computable extension.
      I.e.\ $\operatorname{P}|_{A}=\operatorname{P}(a)$.
    \end{theorem}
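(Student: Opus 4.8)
The plan is to compose $F$ with the retraction. Fix a functional $F\in\operatorname P(A)$, computed by an oracle machine $M^?$ that runs in $A$-restricted polynomial time witnessed by a second-order polynomial $P$, and let $R\colon\B\to A$ be the polynomial-time computable retraction, computed by an oracle machine $N^?$ with second-order polynomial running time $S$. Define $\tilde F\colon\B\to\B$ by $\tilde F(\varphi):=F(R(\varphi))$. Since $R$ is a retraction, $R(\varphi)=\varphi$ for every $\varphi\in A$, so $\tilde F$ restricts to $F$ on $A$; and $\tilde F$ is total because $R$ is total and $F$ is defined on all of $A$. It remains to show that $\tilde F$ is polynomial-time computable in the unrestricted sense, which together with the trivial inclusion $\operatorname P|_A\subseteq\operatorname P(A)$ gives $\operatorname P|_A=\operatorname P(A)$.

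To compute $\tilde F$, I would use the machine $_MN^?$ from the proof of \Cref{resu:composition}: on oracle $\varphi$ and input $\str a$ it simulates $M^?$ on $\str a$ with the oracle tape replaced by a work tape, and each time $M^?$ would query a string $\str b$ it instead simulates $N^?$ on input $\str b$ with oracle $\varphi$, writes back $R(\varphi)(\str b)=N^\varphi(\str b)$, and resumes the simulation of $M^?$. This machine computes $\varphi\mapsto F(R(\varphi))=\tilde F(\varphi)$, since the queries fed to the simulated $M^?$ are exactly the values of $R(\varphi)$.

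For the running time I would follow the estimate in \Cref{resu:composition} almost verbatim, taking $G:=R$, $T:=P$ and $S:=S$. The one essential difference is that $P$ is only an $A$-restricted running time of $M^?$; this still suffices, because the $M$-part of the run of $_MN^?$ on $\str a$ coincides with the run of $M^{R(\varphi)}$ on $\str a$, and $R(\varphi)\in A$, so that part uses at most $P(\flength{R(\varphi)},\length{\str a})$ steps. As in \Cref{resu:composition}, $\flength{R(\varphi)}(n)\le S(\flength\varphi,n)$ because the length of $N^\varphi(\str b)$ is bounded by $\Time_{N^\varphi}(\str b)\le S(\flength\varphi,\length{\str b})$ and $S$ is a running time; since $\flength{R(\varphi)}$ and $S(\flength\varphi,\cdot)$ are monotone and the latter dominates the former pointwise, the running-time property of $P$ yields $P(\flength{R(\varphi)},\length{\str a})\le P(S(\flength\varphi,\cdot),\length{\str a})$. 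The remaining bookkeeping — each query has length at most $P(S(\flength\varphi,\cdot),\length{\str a})$, each $N$-simulation costs at most $S(\flength\varphi,P(S(\flength\varphi,\cdot),\length{\str a}))$, there are at most $P(S(\flength\varphi,\cdot),\length{\str a})$ queries, plus head-repositioning overhead — is identical to \Cref{resu:composition} and gives a bound of the form
\[ C\cdot\bigl(P(S(\flength\varphi,\cdot),\length{\str a})+S(\flength\varphi,P(S(\flength\varphi,\cdot),\length{\str a}))\cdot P(S(\flength\varphi,\cdot),\length{\str a})\bigr), \]
which is a second-order polynomial by \Cref{resu:compositions} and closure of second-order polynomials under pointwise sum and product. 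Hence $\tilde F\in\operatorname P|_A$.

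The only step needing genuine care is the observation that an $A$-restricted time bound for $M^?$ becomes a usable bound once the queries of $M^?$ are rerouted through the retraction into $A$, and, relatedly, verifying the monotonicity hypotheses (that $\flength{R(\varphi)}$ and $S(\flength\varphi,\cdot)$ are monotone) that are needed to invoke the running-time property of $P$. Everything else is a straightforward transcription of the composition argument from \Cref{resu:composition}.
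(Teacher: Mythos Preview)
Your proposal is correct and follows essentially the same route as the paper: compose the $A$-restricted polynomial-time functional with the retraction and observe that the time analysis of \Cref{resu:composition} goes through verbatim because the simulated $M^?$ is fed the oracle $R(\varphi)\in A$, so the $A$-restricted bound on $M^?$ applies. The paper states this in one sentence (``the proof of \Cref{resu:composition} remains valid if the assumptions are weakened to $F$ being $G(\B)$-restricted polynomial-time computable''), while you spell out the bookkeeping and the monotonicity checks explicitly.
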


    \begin{proof}
      The proof that the composition of two polynomial-time computable functionals is polynomial-time computable from \Cref{resu:composition} remains valid if the assumptions are weakened to $F$ being $G(\B)$-restricted polynomial-time computable.
      Thus, the composition of the $A$-restricted polynomial-time computable functional with the retraction is polynomial-time computable.
    \end{proof}
    The previous two results directly entail the following:
    \begin{corollary}[$\operatorname P(\reg) = \operatorname P|_{\reg}$]
      A functional $F:\reg\to\B$ is polynomial-time computable in the sense of Kawamura and Cook if and only if it has a total polynomial-time computable extension.
    \end{corollary}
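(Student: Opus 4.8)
The plan is to combine the two immediately preceding results with an unwinding of definitions; no new construction is required. First I would record that, by the way Kawamura and Cook's framework was set up earlier in the paper, ``$F\colon\reg\to\B$ is polynomial-time computable in the sense of Kawamura and Cook'' means precisely that $F$ is $\reg$-restricted polynomial-time computable, i.e.\ $F\in\operatorname P(\reg)$; and ``$F$ has a total polynomial-time computable extension'' is by definition the statement $F\in\operatorname P|_{\reg}$. So the corollary is exactly the equality $\operatorname P(\reg)=\operatorname P|_{\reg}$.

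The second step is to feed the two preceding results together. The lemma supplies a polynomial-time computable retraction $R\colon\B\to\reg$ (the explicit length-monotonization displayed there, which fixes every length-monotone string function and has image inside $\reg=\Sigma^{**}$), and the theorem, applied with $A:=\reg$, turns the existence of such an $R$ into the equality $\operatorname P|_{\reg}=\operatorname P(\reg)$. For completeness I would spell out in one line the only non-trivial inclusion it packages: given $F\in\operatorname P(\reg)$, the composite $F\circ R$ is total on all of $\B$, extends $F$ since $R$ is the identity on $\reg$, and is polynomial-time computable because — as noted in the proof of \Cref{resu:composition} — the composition bound there remains valid when the outer functional is only $R(\B)$-restricted polynomial-time computable, and $R(\B)\subseteq\reg$. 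The reverse inclusion $\operatorname P|_{\reg}\subseteq\operatorname P(\reg)$ is trivial, since restricting a polynomial-time computable extension back to $\reg$ already witnesses membership in $\operatorname P(\reg)$.

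Thus there is essentially no obstacle left at this stage: all the substance lives in the two cited results — constructing a retraction onto $\reg$ that is itself computable in polynomial time, and checking that the running-time estimate of \Cref{resu:composition} is insensitive to the behaviour of the outer machine outside $R(\B)$. If I had to flag one point to handle carefully, it would be verifying that the displayed formula for $R$ genuinely lands in $\reg$ and is idempotent on it; once that is granted, the corollary follows by the two-step argument above.
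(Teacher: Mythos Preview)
Your proposal is correct and matches the paper's own argument: the corollary is stated immediately after the lemma producing a polynomial-time retraction onto $\reg$ and the theorem that such a retraction forces $\operatorname P|_A=\operatorname P(A)$, and the paper simply says these two results directly entail it. Your unwinding of the definitions and the one-line spelling-out of $F\circ R$ is exactly the content the paper leaves implicit.
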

    An alternative proof can be obtained by adding a clock to the machine.
    (Details about how to clock such a machine can be found in the proof of Theorem \ref{resu:spreg equals preg}.)

\section{Query dependent step restrictions}

  In this section, we investigate a different approach to measuring the running time of an oracle machine that does not rely on higher order objects as running times.
  Recall that for a regular Turing machine the time function $\timef_M:\Sigma^*\to \omega$ is defined to return on input $\str a$ the number of steps that it takes until the machine terminates on input $\str a$.
  A running time of the machine is then defined to be a function $t:\omega\to\omega$ such that
  \[ \tag{rt}\label{eq:rt} \forall \str a\in\Sigma^*: \timef_M(\str a)\leq t(\length{\str a}). \]
  For an oracle Turing machine, each of the time functions $\timef_{M^\varphi}$ may be different.
  Thus, the above definition has to be replaced.
  The most common replacement is to replace $t$ by a higher type object as discussed in the previous section.
  However, there exist other approaches of how to replace this definition in literature.
  Some of them stay with functions of type $\omega\to\omega$ for running times.
  So does the notion this part of the paper introduces.
  To distinguish these objects from the time function and the running times from second-order complexity theory, we refer to such objects as \lq step-counts\rq\ instead of \lq running times\rq.

  One example of a definition in this vein has been investigated by Stephen Cook \cite{MR1159205}.
  He bounds the steps an oracle Turing machine may take by modifying \eqref{eq:rt} as follows:
  He replaces $\length{\str a}$ by the maximum $m_{M^\varphi,\str a}$ of $\length{\str a}$ and the biggest length of any of the oracle answers in the run of $M^?$ with oracle $\varphi$ on input $\str a$ and additionally universally quantifies over $\varphi\in\B$.
  Thus, ending up with
  \[ \forall\varphi\in\B,\forall\str a\in\Sigma^*\colon \Time_{M^\varphi}(\str a) \leq t(m_{M^\varphi}). \]
  He refers to the class of functionals that can be computed by a machine fulfilling the above for $t$ being some polynomial as $\opt$ (for \lq oracle polynomial time\rq).

  \subsection{Step-counts}

    We use a slightly more complicated definition that turns out to be considerably more well-behaved.
    \begin{definition}
      Let $M^?$ be an oracle Turing machine.
      For a given oracle $\varphi$ and a given input $\str a$ denote the content of the oracle answer tape in the $k$-th step of the computation by $\str b_k$.
      Define the \demph{length revision function} $o_{\varphi,\str a}:\omega\to\omega$ recursively as follows:
      \[ o_{\varphi,\str a}(0) := \length{\str a} \quad\text{and}\quad o_{\varphi,\str a}(n+1) :=\max\{o_{\varphi,\str a}(n), \length{\str b_{n+1}}\}. \]
    \end{definition}
    Note that $o_{\varphi,\str a}(k+1)>o_{\varphi,\str a}(k)$ means that in the $k$-th step of the computation, the machine asks an oracle query and the answer is bigger than both the input $\str a$ and any of the answers the oracle has given earlier in the computation.
    We call this a \demph{length revision} as it means that it became apparent to the machine that its input (the oracle) is bigger than what the previous evidence indicated.

    For an oracle machine $M^?$ with a fixed oracle $\varphi\in\B$ let $\timef_{M^\varphi}(\str a)\in\omega \cup\{\infty\}$ be the number of steps that the computation of $M^\varphi$ takes on input $\str a$.
    I.e. the machine is explicitly allowed to diverge on some inputs.
    \begin{definition}[compare \cref{fig:step-count}]\label{def:step-count}
      A function $t\colon\omega\to\omega$ is a \demph{step-count} for an oracle Turing machine $M^?$ if
      \[ \forall \varphi\in\B,\forall \str a\in\Sigma^*, \forall n\leq \timef_{M^\varphi}(\str a)\colon n\leq t(o_{\varphi,\str a}(n)). \]
      Denote the set of all functionals on the Baire space that can be computed by an oracle Turing machine that has a polynomial step-count by $\PSC$.
    \end{definition}
    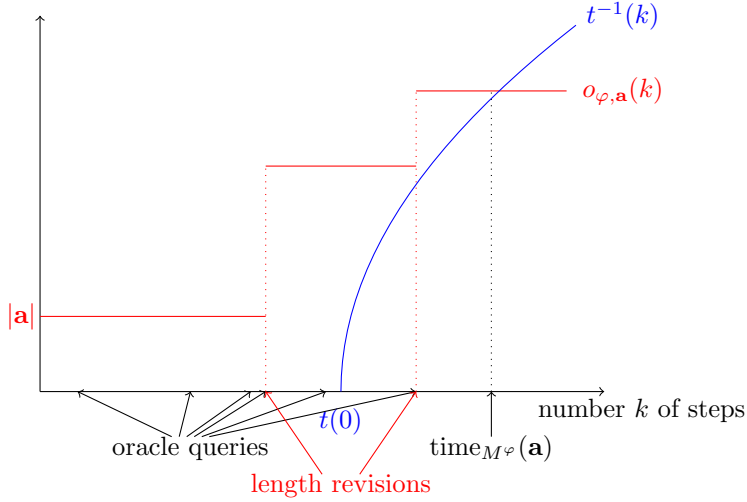
\begin{figure}
      \centering
      \begin{tikzpicture}
        \draw[->] (0,0) -- (0,5);
        \draw[->] (0,0) -- (7.5,0);
        \node at (8,-.25) {number $k$ of steps};
        \draw[color=red] (0,1) -- (3,1);
        \draw[color=red] (3,3) -- (5,3);
        \draw[color=red] (5,4) -- (7,4);
        \node at (6,-.75) {$\timef_{M^\varphi}(\str a)$};
        \draw[->] (6,-.6) -- (6,0);
        \draw[domain = 0:1.25,blue] plot ({2*\x*\x+4},{3.9*\x});
        \node[blue] at (4,-.4) {$t(0)$};
        \node[blue] at (7.75,5) {$t^{-1}(k)$};
        \node[red] at (7.75,4) {$o_{\varphi,\str a}(k)$};
        \draw[dotted] (6,0) -- (6,4);
        \node[red] at (-.25,1) {$\length{\str a}$};
        \node at (2,-.75) {oracle queries};
        \draw[->] (1.75,-.6) -- (.5,0);
        \draw[->] (1.85,-.6) -- (2,0);
        \draw[->] (1.95,-.6) -- (2.8,0);
        \draw[->] (2.05,-.6) -- (3,0);
        \draw[->] (2.15,-.6) -- (3.8,0);
        \draw[->] (2.25,-.6) -- (5,0);
        \node[red] at (4,-1.25) {length revisions};
        \draw[->,red] (3.75,-1.1) -- (3,0);
        \draw[->,red] (4.25,-1.1) -- (5,0);
        \draw[dotted,red] (3,0) -- (3,3);
        \draw[dotted,red] (5,0) -- (5,4);
      \end{tikzpicture}
      \caption{Verifying that $\varphi$ and $\str a$ are not a counterexample of $t$ being a step-count.
      Under the assumption that $t$ is invertible on the set $[t(0),\infty)$.}\label{fig:step-count}
    \end{figure}
    Note that in contrast to \Cref{eq:rt}, the above is not void if the machine diverges on some inputs.
    The relationship between termination of a machine and the existence of a step-count is quite involved.
    For instance: If a machine has a step-count and diverges, then the machine queries the oracle an infinite number of times.
    Furthermore, if there is an integer bound on the length of all return values of an oracle, then every machine that has a step-count terminates when given that oracle and an arbitrary input.

    Note that $o_{\varphi,\str a}(\Time_{M^\varphi}(\str a))$ is by definition the maximum of the length of $\str a$ and the biggest oracle query done in the computation of $M^\varphi(\str a)$.
    This number was previously called $m_{M^\varphi,\str a}$.
    Thus, Stephen Cook's class $\opt$ can be reproduced by not quantifying over all $n\leq\Time_{M^\varphi}(\str a)$ but only considering the case $n=\timef_{M^\varphi}(\str a)$.
    In upcoming proofs it is used that it is possible to clock a machine while basically maintaining the same step-count by checking in each step, that the requirement above is fulfilled.
    Note that this is not possible for the machines used by Cook without increasing the step-count considerably, as his framework allows to retroactively justify high time-consumption early in the computation by a big oracle answer late in the computation.

    The very example that Cook used to disregard the class $\opt$ as a candidate for the class of polynomial-time functionals can be used to also disregard the class of total functionals that are computed by a machine that allows a polynomial step-count:
    \begin{example}[$\PSC\not\subseteq \operatorname{P}$]
      The total functional $F:\B\to\B$ defined by
      \[ F(\varphi)(\str a) := \varphi^{\length{\str a}}(\sdzero) \]
      can be computed by an oracle Turing machine that has a polynomial step-count but does not carry polynomial-time computable input to polynomial-time computable output.

      To see that this machine has a polynomial step-count, note that it can be computed by the machine that proceeds as follows:
      It copies the input to the memory tape and writes $\sdzero$ to the oracle query tape.
      Then as long as the memory tape is not empty it repeats the following steps:
      First copies the content of the oracle answer band to the oracle query band.
      Then it removes the content of the last non-empty cell from the memory band.
      Finally it enters the oracle query state.
      When the memory tape is empty it copies the content of the oracle answer band to the output tape and enters the termination state.

      Copying a string of length $n$ takes $\bigo(n)$ steps.
      The length of the string that has to be copied is always bounded by the previous oracle answers.
      The loop is carried out exactly $\length{\str a}$ times.
      Therefore, there is some step-count in $\bigo(n^2)$.

      To verify that the functional does not preserve the class of polynomial-time computable functionals consider the polynomial-time computable functional $\psi(\str a) := \str a\str a$.
      Note that
      \[ F(\psi)(\str a) = \psi^{\length{\str a}}(\sdzero) = \sdzero^{2^{\length{\str a}}}. \]
      Therefore, writing $F(\psi)(\str a)$ takes at least $2^{\length{\str a}}$ steps and thus $F(\psi)$ cannot be polynomial-time computable.
    \end{example}
    This means that further restrictions are necessary.
    In \cite{MR1159205} this is the point where Stephen Cook decides to use polynomial-time computable functionals.
    This paper presents a different set of restrictions that can be used.

  \subsection{Finite length-revision}

    Let $M^?$ be an oracle Turing machine that always terminates.
    Then for any oracle $\varphi$ and any string $\str a$ the computation of $M^\varphi$ on $\str a$ is finite and only queries the oracle a finite number of times.
    Note that $\#o_{\varphi,\str a}(\omega)$ coincides with the number of length revisions that happens during the computation on oracle $\varphi$ and input $\str a$.
    Since the number of length revisions is bounded by the number of total oracle queries, the following statement holds true:
    \[ \forall \varphi\in \B,\forall \str a\in\Sigma^*\colon \exists N\in\NN\colon \#o_{\varphi, \str a}(\omega)\leq N. \]
    In general $N$ depends on the choice of the oracle and the string.
    Our restriction on the behavior of the machine is that there is an $N$ that works independently of the choice of the oracle and the input.
    \begin{definition}\label{def:finite length-revision}
      We say that an oracle Turing machine $M^?$ has \demph{finite length-revision} if there is an integer $N$ such that no matter what the oracle and the input are, no more than $N$ length revisions happen.
      That is, if its length revision functions $o_{\varphi,\str a}$ fulfill
      \[ \exists N\in\NN\colon \forall \varphi\in \B, \forall \str a\in\Sigma^*\colon\#o_{\varphi,\str a}(\omega) \leq N. \]
      We denote the set of all functionals on the Baire space that can be computed by machines with finite length-revision by $\FLR$.
    \end{definition}
    Finite length revision does a priori neither restrict the number of oracle questions nor the length of the oracle answers:
    The restriction is that there is a finite number of length revisions, that is, only a finite number of times it happens that a query is asked such that the answer is strictly bigger than the input and any earlier oracle answer.

    \begin{example}[$\operatorname{P}\not\subseteq \FLR$]\label{ex:flr not contained in p}
      Consider the functional
      \[ F:\B\to\B, \quad F(\varphi)(\str a):= \sdzero^{\max\{\length{\varphi(\sdzero^n)}\mid n\leq \length{\str a} \}}. \]
      The straightforward implementation asks $n$ queries, compares their lengths and returns the maximum.
      This can be done in time $P(l,n) = C(n + n \cdot l(n))+C$ for some $C\in\NN$.
      However, since $\length{\varphi(\sdzero^n)}$ may be strictly increasing when $n$ increases, this machine does not have finite length revision.
      
      Indeed, no machine with finite length revision can compute $F$, as can be proven via contradiction as follows:
      Assume that there was such a machine $M^?$.
      Let $N$ be a bound on the length-revisions $M^?$ does.
      Define an oracle such that the output of $M^\varphi(\sdzero^{N+1})$ is incorrect as follows:
      Let $\str a_1$ be the first oracle query that is asked in the run of the machine $M^\varphi(\sdzero^{N})$.
      Set $\varphi(\str a_1) := \sdzero^{N+1}$.
      Thus, a length-revision happens.
      Let $\str a_2$ be the next oracle query that the machine poses.
      Set $\varphi(\str a_2):= \sdzero^{N+2}$.
      This means that another length revision happens.
      Carry on in that way until $\varphi(a_N)$ is set to $\sdzero^{2N}$.
      After asking the query $\str a_N$, the machine can not ask another query as we may as well set the return value to be bigger again and no further length revision is allowed.

      Note that the run of the machine on $\sdzero^N$ is identical for any oracle that fulfills $\psi(\str a_i)=\sdzero^{N+i}$.
      Let $M$ be the number of steps the machine $M^\psi$ takes for any of these oracles to terminate.
      There are $N+1$ strings of the form $\sdzero^n$ for $n\leq N$.
      Thus, at least one of these strings is not contained within $\str a_1,\ldots,\str a_N$.
      Let $\sdzero^m$ be this string.
      Let $\varphi$ be the string function defined as follows:
      \[ \varphi(\str b) = \begin{cases} \sdzero^{N+i} &\text{if }\str b=\str a_i\\ \sdzero^{M+1} &\text{if }\str b = \sdzero^m\\\varepsilon &\text{otherwise.} \end{cases} \]
      Obviously, the run of $M^\varphi$ on $\sdzero^N$ coincides with the one described above.
      Therefore the return value can have at most $M$ bits.
      Since $m\leq N$ it holds that $\length{F(\varphi)(\sdzero^N)} \geq \length{\varphi(\sdzero^m)} \geq M+1$.
      Thus $M^\varphi$ can on input $\sdzero^N$ not produce the right return value.
    \end{example}

  \subsection{Strong polynomial-time computability}

    While neither finite length revision nor having a step-count implies termination of the machine, the combination does:
    We mentioned that a machine that has a step-count may only diverge with oracle $\varphi$ if there is no bound on the oracle answers.
    This, however, is forbidden by finite length revision.
    Therefore, if $M^?$ is a machine that has finite length-revision and a step-count, then the computation of $M^?$ with any oracle and on any input terminates.

    \begin{definition}
      Call a functional $F\colon \B \to\B$ \demph{strongly polynomial-time computable} if there is an oracle Turing machine computing $F$ that has both finite length-revision and a polynomial step-count (see Definition~\ref{def:step-count}).
      We denote the set of all strongly polynomial-time computable operators by $\operatorname{SP}$
    \end{definition}

    As the name suggests, strong polynomial-time computability implies poly\-no\-mial-time computability.
    \begin{lemma}[$\operatorname{SP}\subseteq \operatorname{P}$]\label{resu:strongpolytime implies polytime}
      Any total strongly poly\-nom\-ial-time computable functional is polynomial-time computable.
    \end{lemma}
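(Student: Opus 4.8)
The plan is to show that the machine witnessing strong polynomial-time computability already has a second-order polynomial running time, so that no modification or clocking is needed. Fix such a machine $M^?$ for the total functional $F$: it has finite length-revision with some bound $N\in\NN$ and a polynomial step-count $t$. First I would recall that $M^?$ then terminates on every oracle and input --- as observed just before the definition of $\operatorname{SP}$, a machine with a step-count can diverge only on an oracle whose answers have unbounded length, which finite length-revision forbids. Hence $\timef_{M^\varphi}(\str a)\in\NN$ for all $\varphi,\str a$, and the step-count inequality may legitimately be applied at $n=\timef_{M^\varphi}(\str a)$.

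The core of the argument is to bound the length-revision function $o_{\varphi,\str a}$ uniformly by a second-order polynomial in $\flength\varphi$ and $\length{\str a}$. Fix $\varphi$ and $\str a$, write $l:=\flength\varphi$ (monotone by definition) and $m:=\length{\str a}$, and let $R_0=m<R_1<\dots<R_j$ be the successive values assumed by $o_{\varphi,\str a}$, so that $j\leq N$. Suppose the $i$-th length revision happens at step $k_i$. The oracle query answered at that step sits on the query tape throughout the first $k_i-1$ steps and hence has length at most $k_i-1$; since $k_i-1\leq\timef_{M^\varphi}(\str a)$ and $o_{\varphi,\str a}(k_i-1)=R_{i-1}$, the step-count gives $k_i-1\leq t(R_{i-1})$. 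Consequently the answer has length at most $\flength\varphi(t(R_{i-1}))$, i.e.
\[ R_i \leq l(t(R_{i-1})), \]
using monotonicity of $l$ and of the polynomial $t$. Now define $S_0(l,n):=n$ and $S_{i+1}:=S_i+(t(S_i))^+$. Each $S_i$ is a second-order polynomial, because the class is closed under sums and products (so $t(S_i)$ is one) and under $P\mapsto P^+$; here it is essential that $N$ is a fixed integer and that $t$ is a polynomial rather than an arbitrary computable function. An induction using the recursion above gives $R_i\leq S_i(l,m)$ for $i\leq j$, and since the $S_i$ increase pointwise, $o_{\varphi,\str a}(k)\leq R_j\leq S_N(l,m)$ for every $k\leq\timef_{M^\varphi}(\str a)$.

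Applying the step-count at $n=\timef_{M^\varphi}(\str a)$ now yields, for all $\varphi\in\B$ and $\str a\in\Sigma^*$,
\[ \timef_{M^\varphi}(\str a)\leq t(o_{\varphi,\str a}(\timef_{M^\varphi}(\str a)))\leq t(S_N(\flength\varphi,\length{\str a})). \]
The function $P(l,n):=t(S_N(l,n))$ is again a second-order polynomial, it satisfies the monotonicity requirement imposed on running times, and $M^?$ terminates everywhere, so $P$ is a running time of $M^?$. Since $M^?$ computes $F$, this shows $F\in\operatorname{P}$.

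The delicate point --- the one I expect to need the most care in the full write-up --- is the off-by-one bookkeeping that lets the step-count control the size of the \emph{query} rather than the \emph{answer}: one must evaluate the step-count at step $k_i-1$, where $o_{\varphi,\str a}$ still equals $R_{i-1}$, since evaluating it at step $k_i$ would only give the useless circular bound $k_i\leq t(R_i)$. Everything else is routine: verifying that the recursion $R_i\leq l(t(R_{i-1}))$ genuinely unfolds into a second-order polynomial (this is exactly where closure under $P\mapsto P^+$ and the finiteness of $N$ enter) and checking that composing the polynomial $t$ on the outside of $S_N$ keeps the result inside the class of second-order polynomials.
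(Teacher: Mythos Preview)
Your proposal is correct and follows essentially the same approach as the paper: iterate the step-count through the length function at most $N$ times to bound each successive length revision, thereby obtaining a second-order polynomial running time for the very machine $M^?$ that witnesses strong polynomial-time computability. The paper writes the resulting bound as $P(l,n)=(p\circ(l+\id))^N(p(n))$ rather than your $t(S_N)$, and is less explicit about the off-by-one you flag, but the arguments are otherwise identical.
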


    \begin{proof}
      Let $M^?$ be the Turing machine that verifies that the total functional is strongly polynomial time computable, $p$ a polynomial step-count of the machine and $N$ a bound of the number of length revisions it does.
      To see that the machine runs in polynomial time fix some arbitrary oracle $\varphi$ and a string $\str a$.
      By the definition of being a step-count, the first oracle query in the run of $M^\varphi$ on input $\str a$ has at most $p(\length{\str a})$ bits.
      Thus the return value of the oracle has at most length $\flength{\varphi}(p(\length{\str a}))$.
      Therefore, again since $p$ is a step-count, the next oracle query that leads to a length revision can not have more than $p(\flength{\varphi}(p(\length{\str a}))+\length{\str a})$ bits.
      Repeating the above argument $N$ times and using that $N$ is a bound of the number of length revisions proves that the computation terminates within at most $(p\circ(\flength{\varphi}+\id))^N(p(\length{\str a}))$ steps.
      That is, that the second order polynomial $P(l,n):= (p\circ (l+\id))^N(p(n))$ is a running time of $M^?$.
    \end{proof}
    Note that a better time bound of the machine is given by the function $p_N$ defined just as in \Cref{resu:property of a polynomial majorant}.
    However, this function is in general not a second order polynomial.

    On the other hand, strong polynomial-time computability is a strictly stronger requirement than polynomial-time computability.
    \begin{lemma}[$\operatorname{SP}\subsetneq \operatorname{P}$]
      There exists a polynomial-time computable functional that is not computable with finite length-revision.
      In particular, this functional is not strongly polynomial-time computable.
    \end{lemma}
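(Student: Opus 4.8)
The plan is to reuse the functional already analyzed in \Cref{ex:flr not contained in p}, namely
\[ F\colon\B\to\B,\qquad F(\varphi)(\str a):= \sdzero^{\max\{\length{\varphi(\sdzero^n)}\mid n\leq \length{\str a}\}}. \]
\Cref{ex:flr not contained in p} already records that $F$ is polynomial-time computable — the obvious machine queries $\varphi(\sdzero^0),\dots,\varphi(\sdzero^{\length{\str a}})$, keeps track of the length of the longest answer seen so far, and finally prints that many zeros, which runs within the second-order polynomial $(l,n)\mapsto C(n+n\cdot l(n))+C$ — and that no machine with finite length-revision can compute it. Since a strongly polynomial-time computable functional is by definition computable by a machine with finite length-revision, a functional outside $\FLR$ is a fortiori outside $\operatorname{SP}$; combined with \Cref{resu:strongpolytime implies polytime} this single example witnesses $\operatorname{SP}\subsetneq\operatorname P$.

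For completeness I would restate the non-membership argument. Suppose towards a contradiction that $M^?$ computes $F$ and performs at most $N$ length revisions on every oracle and input. Run $M^?$ on input $\sdzero^{N}$ and build an adversary oracle on the fly: whenever the $i$-th query $\str a_i$ (for $1\le i\le N$) is posed, declare its answer to be $\sdzero^{N+i}$, which is strictly longer than the input $\sdzero^N$ and than every earlier answer, hence forces a length revision. After the query $\str a_N$ the machine cannot query again, for otherwise we could set the next answer to be longer still and obtain an $(N+1)$-st length revision. Let $M$ be the number of steps of this run (the same for every oracle agreeing with the declared values on $\str a_1,\dots,\str a_N$). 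Among the $N+1$ strings $\sdzero^0,\dots,\sdzero^{N}$ at least one, say $\sdzero^m$ with $m\le N$, is never among $\str a_1,\dots,\str a_N$. Extend the oracle to a $\varphi\in\B$ by setting $\varphi(\sdzero^m):=\sdzero^{M+1}$ and $\varphi(\str b):=\varepsilon$ on all strings not otherwise specified; the run of $M^\varphi(\sdzero^N)$ is unchanged, so its output has at most $M$ bits, whereas $\length{F(\varphi)(\sdzero^N)}\ge\length{\varphi(\sdzero^m)}=M+1$ because $m\le N$. This contradicts $M^\varphi=F(\varphi)$.

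The only place where care is needed is the adversary construction: one must check that each freshly declared answer is genuinely a length revision (strictly larger than the input and than all earlier answers), that the bound $N$ really does preclude a further query after $\str a_N$ no matter how the as-yet-undetermined values of $\varphi$ are filled in, and that the final patch at $\sdzero^m$ does not retroactively introduce a larger oracle answer into the recorded run — which it does not, precisely because $\sdzero^m$ is never queried there. The counting step ($N+1$ candidate strings $\sdzero^0,\dots,\sdzero^N$ but at most $N$ queries, so one is missed) and the output-length bound are routine, so no genuine obstacle remains; the construction is the whole content.
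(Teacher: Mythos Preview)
Your proof is correct and follows exactly the paper's approach: the paper's proof simply cites \Cref{ex:flr not contained in p} for the polynomial-time computable functional outside $\FLR$ and notes $\operatorname{SP}\subseteq\FLR$, which is precisely what you do (with a faithful restatement of the adversary argument added for completeness). One cosmetic remark: when you say the machine ``cannot query again'' after $\str a_N$, strictly speaking it could re-query one of $\str a_1,\dots,\str a_N$ without incurring a new length revision, but this changes nothing since those answers are already fixed and hence the remainder of the run is determined.
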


    \begin{proof}
      An functional that is polynomial-time computable but not computable with finite length revision was discussed in detail in Example~\ref{ex:flr not contained in p}.
      Since $\operatorname{SP}\subseteq \FLR$, this indeed proves that the inclusion $\operatorname{SP}\subseteq \operatorname{P}$ from the previous result is strict.
    \end{proof}

    A candidate for a natural example of an operator from analysis that is not strongly polynomial-time computable is constructed in \cite{Aminimal}.

  \subsection{Compatibility with relativization}\label{sec:compatibility with relativization}

    For strong polynomial-time computability, relativized notions can be introduced analogously to Section~\ref{sec:relativization}:
    Let $A\subseteq \B$.
    A machine $M^?$ is said to run in $A$-restricted strongly polynomial time if the number of length revisions $M^?$ does on oracles from $A$ is bounded by a number and there is a polynomial step-count that is valid whenever the oracle is from $A$.
    That is if the formulas from Definition~\ref{def:step-count} and Definition~\ref{def:finite length-revision} are fulfilled if \lq$\forall \varphi\in\B$\rq\ is replaced by \lq$\forall\varphi\in A$\rq.
    Again, we denote the set of all functionals whose domain is $A$ and that can be computed by an $A$-restricted strong polynomial-time machine by $\operatorname{SP}(A)$ and the set of all functionals whose domain is contained in $A$ and that have a total strongly polynomial-time computable extension by $\operatorname{SP}|_A$.
    For strong polynomial-time computability these classes coincide.
    This may be interpreted as strong polynomial-time computability being more well behaved with respect to partial functionals.

    \begin{lemma}[$\operatorname{SP}(A)=\operatorname{SP}|_A$]\label{resu: sp(a) equals spla}
      A $A$-restricted strongly poly\-nomial-time computable functional has a total strongly polynomial-time computable extension.
    \end{lemma}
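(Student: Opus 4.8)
The plan is to show that any $A$-restricted strongly polynomial-time machine $M^?$ can be modified into a total machine $\tilde M^?$ that computes an extension of $F$ and is (unrestrictedly) strongly polynomial-time. The key tool is that a machine with a polynomial step-count can be \emph{clocked}: in each step we recompute $o_{\varphi,\str a}(n)$ (this only requires knowing the lengths of the oracle answers seen so far, which the machine can track on an auxiliary tape) and verify that the current step number $n$ still satisfies $n \leq p(o_{\varphi,\str a}(n))$; if this ever fails we halt immediately (say, outputting $\varepsilon$). This gives a machine that has $p$ (or a slight modification thereof, to account for the overhead of the clock check) as an \emph{unconditional} step-count, since by construction it cannot run longer. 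The point is that on oracles $\varphi\in A$ the clock never fires, because $p$ is already a valid step-count there, so the clocked machine agrees with $M^?$ on $A$ and hence still computes $F$ on $\dom(F)\subseteq A$.

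The remaining obstacle is finite length-revision: clocking controls the running time but does nothing to bound the number of length revisions on oracles outside $A$. So in addition I would add a second, independent counter that tallies length revisions, and abort (outputting $\varepsilon$) once this counter exceeds the bound $N$ that witnesses $A$-restricted finite length-revision for $M^?$. Again, on oracles from $A$ this abort condition is never triggered, so the behaviour on $\dom(F)$ is unchanged. The resulting machine $\tilde M^?$ thus makes at most $N$ length revisions on \emph{every} oracle, so it has finite length-revision; and it still has a polynomial step-count (either the clock has not fired, in which case $n\leq p(o_{\varphi,\str a}(n))$ holds, or it has fired and the machine has halted, so there is nothing to check for larger $n$). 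Note that here we crucially use the remark preceding the definition of $\operatorname{SP}$: a machine with both finite length-revision and a step-count always terminates, so $\tilde M^?$ defines a genuine total functional, which is the required extension.

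The one point requiring a little care is that recomputing $o_{\varphi,\str a}(n)$ and performing the two counter checks at every step must not destroy the polynomial step-count. Maintaining the running maximum of oracle-answer lengths costs only $\bigo(o_{\varphi,\str a}(n))$ work per step to update, and the length-revision counter is a number bounded by $N$ plus one, so both checks add at most a polynomial-in-$o_{\varphi,\str a}(n)$ overhead per original step; composing this with the original polynomial step-count $p$ still yields a polynomial step-count for $\tilde M^?$. (Alternatively one runs the clock against $p$ directly and simply replaces $p$ by a large enough polynomial $p'$ from the outset so that $p'$ absorbs the bookkeeping overhead, which is the cleanest way to phrase it.) I expect this bookkeeping to be the only genuinely fiddly part; the conceptual content is entirely in the observation that both the clock and the length-revision counter are ``invisible'' on $A$ and therefore preserve the computed functional while taming the behaviour everywhere else.
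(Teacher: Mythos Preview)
Your proposal is correct and takes essentially the same approach as the paper: clock the machine against the polynomial step-count $p$, count length revisions and abort once $N$ is exceeded, and observe that both abort conditions are invisible on oracles from $A$. The only cosmetic difference is that the paper implements the clock as a countdown counter (initialized to $p(\length{\str a})$ and topped up after each length revision) rather than rechecking $n\leq p(o_{\varphi,\str a}(n))$ at every step; your overhead discussion is in fact slightly more explicit than the paper's.
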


    \begin{proof}
      Let $F:A\to\B$ be an $A$-restricted strongly polynomial-time computable functional and let $M^?$ be a machine that witnesses the strong polynomial-time computability of the functional.
      Let $N$ be maximum number of length revisions $M^?$ does on any oracle from $A$ and let $p$ be a polynomial step-count valid for input from $A$.
      Define a new machine $\tilde M^?$ as follows:
      $\tilde M^?$ starts by initializing a counter with $N$ written on it.
      Furthermore it saves the length of the input string and produces the coefficients of $p$ on the memory tape.
      It applies the polynomial $p$ to the length of the input and initializes a second counter holding this value.
      Now it follows the exact same steps $M^?$ does as long as no oracle query is done and meanwhile counts down the second counter.
      If the second counter hits zero, it terminates and returns $\varepsilon$.
      If before that happens, an oracle call is done, it decreases the first counter.
      If the counter was already zero, it terminates and returns $\varepsilon$.
      If it was not, it writes the maximum of the previous content and the length of the return value to where it originally noted the length of the input.
      It applies the polynomial to this new value and ads the difference to the previous value to the second counter.
      Then it continues as before.

      It is clear that the machine described above runs with length revision $N+1$, that it has a polynomial step-count (that depends only on $p$ and $N$) and that whenever the oracle is from $A$, none of the counters will hit zero and $\tilde M^\varphi$ and $M^\varphi$ produce the same values in the end.
      Thus $\tilde M^?$ computes a total strongly polynomial-time computable extension of $F$.
    \end{proof}
    This proves that there is a stable notion of strong polynomial-time computability of partial functionals.
    In particular referring to partial functionals as being strongly polynomial-time computable does not lead to confusion and we may drop the \lq$A$-restricted\rq\ part.

  \subsection{Comparison to polynomial-time on $\reg$}

    Recall that originally polynomial-time computability was only defined for machines that compute total functions.

    Kawamura and Cook's framework for complexity of operators in analysis, however, does not require a realizer to have a total polynomial-time computable extension, but instead gives a new definition of what polynomial-time computability of a functional on $\reg$ means.
    Earlier, this notion of complexity was called being $\reg$-restricted polynomial-time computable and the class of these functionals was denoted by $P(\reg)$.
    This section proves that a functional $F:A\to\B$ whose domain is contained in $\reg$, is $A$-restricted polynomial-time computable if and only if it is strongly polynomial time computable.
    Note that here strong polynomial time computability means one of the two conditions of being from $\operatorname{SP}(A)$ or from $\operatorname{SP}|_A$  that were proven equivalent in \Cref{resu: sp(a) equals spla}.
    In particular this equates all the intermediate classes, like those functionals that have an extension from $\operatorname{P}(\reg)$.
    Furthermore, it implies that the domain of the functional considered in Example~\ref{ex:flr not contained in p} was necessarily not contained in $\reg$.

    \begin{theorem}[$A\subseteq\reg\Rightarrow\operatorname{SP}(A) = \operatorname{P}(A)$]\label{resu:spreg equals preg}
      Let $A\subseteq \reg$.
      A functional is $A$-restricted poly\-no\-mial-time computable if and only if it is strongly polynomial-time computable.
    \end{theorem}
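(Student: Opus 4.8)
The plan is to prove the two inclusions separately, the harder one being $\operatorname{P}(A)\subseteq\operatorname{SP}(A)$. The direction $\operatorname{SP}(A)\subseteq\operatorname{P}(A)$ is essentially a relativized version of Lemma~\ref{resu:strongpolytime implies polytime}: given a machine $M^?$ witnessing $A$-restricted strong polynomial-time computability, with polynomial step-count $p$ valid on $A$ and length-revision bound $N$ on $A$, the same iteration argument as in that proof shows that $P(l,n) := (p\circ(l+\id))^N(p(n))$ is a running time for $M^?$ as long as the oracle is taken from $A$, which is exactly what $A$-restricted polynomial-time computability requires. (One does not even need $A\subseteq\reg$ for this direction.) I would state this briefly and refer back to the proof of Lemma~\ref{resu:strongpolytime implies polytime}.

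For the substantial direction, suppose $F\colon A\to\B$ is computed by an oracle machine $M^?$ with second-order polynomial running time $P$ on all oracles from $A$. Fix a polynomial majorant $(N,p)$ of $P$ as provided by Lemma~\ref{resu:existence of a polynomial majorant}, so by Lemma~\ref{resu:property of a polynomial majorant} we have $P(l,n)\le p_N(l,n)$ for monotone $l$. The goal is to build an equivalent machine $\tilde M^?$ that has finite length-revision and a polynomial step-count. The key observation is that on oracles $\varphi\in A\subseteq\reg$ the size function $\flength\varphi$ is attained along the inputs $\sdzero^0,\sdzero^1,\ldots$, i.e.\ $\flength\varphi(n)=\length{\varphi(\sdzero^n)}$, because length-monotonicity forces the longest output among strings of length $\le n$ to occur at $\sdzero^n$. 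Thus $\tilde M^?$ can, before doing anything else, actively probe the oracle: having already noted an estimate $m$ for the relevant size, it queries $\sdzero^m$, and more generally it maintains a guess for $\flength\varphi$ on the range of inputs the simulation of $M^?$ could ever query, refreshing the guess only when a genuinely longer oracle answer appears. Each refresh corresponds to a length revision, and I will argue that the number of refreshes needed is bounded by $N$ (the height of the majorant tree): each of the $N$ levels of nested function-application in $P$ can force at most one jump in the relevant size estimate, because once $\tilde M^?$ knows $\flength\varphi$ up to argument $p_i(\flength\varphi,\length{\str a})$ it can compute the next bound $p_{i+1}$ and pre-query the oracle at the single point $\sdzero^{p_{i+1}(\ldots)}$ that could reveal a larger value. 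Between refreshes $\tilde M^?$ runs a clocked simulation of $M^?$: it knows the current size estimate, hence can evaluate the relevant $p_i$ and abort (returning $\varepsilon$, harmless since such oracles lie outside $A$) if the step count is exceeded — exactly the clocking technique already used in Lemma~\ref{resu: sp(a) equals spla}. This gives $\tilde M^?$ length-revision at most $N$ (or $N+1$ to be safe) and a step-count that is polynomial in $\length{\str a}$ and the current size estimate, which after at most $N$ revisions is bounded in terms of $p_N$; composing the at-most-$N$ polynomial phases yields a single polynomial step-count depending only on $p$ and $N$.

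The main obstacle I anticipate is making the "each level of nesting costs at most one length revision" argument fully rigorous: one must show that the recursion for $p_i$ in Lemma~\ref{resu:property of a polynomial majorant} lines up precisely with the order in which $\tilde M^?$ discovers larger oracle values, so that $\tilde M^?$ never needs to revise more than once per level. This requires being careful that $\tilde M^?$'s pre-queries at $\sdzero^{p_{i+1}(\ldots)}$ really do certify that no query the simulated $M^?$ will pose can have a longer answer than the current estimate predicts — which uses length-monotonicity to convert "the answer at $\sdzero^k$ is short" into "all answers at strings of length $\le k$ are short", together with the fact that the running time $P$ already bounds the lengths of all queries $M^?$ makes. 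A secondary bookkeeping point is that $\tilde M^?$ must carry out these pre-queries and the polynomial evaluations without itself incurring extra length revisions or blowing up the step-count; but pre-querying $\sdzero^m$ and copying/comparing strings of the lengths involved only adds polynomially many non-revising steps, so this is routine once the structure is set up. Finally, as remarked after Lemma~\ref{resu: sp(a) equals spla}, the clocking ensures $\tilde M^?$ is total, so it indeed witnesses membership in $\operatorname{SP}$, and on oracles from $A$ none of the clocks fire, so $\tilde M^\varphi = M^\varphi = F(\varphi)$.
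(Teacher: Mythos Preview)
Your proposal is correct and follows essentially the same route as the paper: for $\operatorname{P}(A)\subseteq\operatorname{SP}(A)$ you use a polynomial majorant $(N,p)$ of the running time $P$, exploit length-monotonicity to read off $\flength\varphi(k)$ from the single query $\varphi(\sdzero^{k})$, iterate this $N$ times to compute $p_N(\flength\varphi,\length{\str a})$, and then run a clocked simulation of $M^?$; the easy direction is handled via Lemma~\ref{resu:strongpolytime implies polytime}. The paper's construction is organized slightly more cleanly than your description suggests: it performs \emph{all} $N$ probing queries up front (each step replacing $m$ by $\max\{m,\length{\varphi(\sdzero^{p(m)})}\}$) and only afterwards runs a single clocked simulation of $M^?$, rather than interleaving simulation with refreshes as your wording ``between refreshes $\tilde M^?$ runs a clocked simulation'' hints at---this makes your ``main obstacle'' (that each nesting level costs at most one revision) immediate, since the probing phase is literally $N$ queries and the subsequent simulation is clocked by the already-computed $p_N$.
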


    \begin{proof}[That $\operatorname{SP}(A)\subseteq \operatorname{P(A)}$]
      This direction follows from previous results:
      Let $F\colon A\to\B$ be an $A$-restricted polynomial-time computable operator.
      \Cref{resu: sp(a) equals spla} implies that $F$ has a total strongly polynomial-time computable extension.
      By \Cref{resu:strongpolytime implies polytime}, this total extension is polynomial-time computable.
      In particular it is $A$-restricted polynomial-time computable, as this is a weaker requirement.
      Therefore it is contained in $\operatorname P(A)$.
    \end{proof}

    The other direction of the proof heavily relies on the notions discussed in \Cref{sec:polynomial majorants}.

    \begin{proof}[That $\operatorname{P}(A)\subseteq\operatorname{SP}(A)$]
      Let $F:A\to\B$ be computable in $A$-restricted polynomial time.
      By \Cref{resu: sp(a) equals spla} this operator has a total polynomial time computable extension.
      Let $M^?$ be a machine that computes this extension in time bounded by a second-order polynomial $P$.
      From \Cref{resu:existence of a polynomial majorant} it follows that there exists a polynomial majorant $(N,p)$ of $P$.
      Define a new oracle Machine $\tilde M^?$ as follows:
      When given $\varphi$ as oracle and a string $\str a$ as input, the machine computes $p(m)$ with $m:=\length{\str a}$.
      It then poses the oracle query $\varphi(\sdzero^{p(m)})$ and takes the maximum of the length of the return value and $m$.
      It repeats this procedure with $m$ set to be this maximum.
      The above is repeated $N$ times.
      It writes the result into a counter, does a final query of the oracle on the value of this counter many zeros and then caries out the computations $M^?$ does on oracle $\varphi$ and input $\str a$ while counting the counter down.
      If the counter runs empty or a length revision is encountered it terminates and returns $\varepsilon$.
      If $M^?$ terminates without this happening, it returns $M^\varphi(\str a)$.

      Whenever the oracle $\varphi$ is length monotone, the above procedure is easily checked to first produce a value of the function $p_N(\length{\varphi},\length{\str a})$ from \Cref{resu:property of a polynomial majorant} thereby doing at most $N-1$ length revisions and within a polynomial step-count.
      Then it simulates the machine $M^?$, for at most $p_N(\length{\varphi},\length{\str a})$ steps and allowing at most one furter length revision.
      Thus, the machine $\tilde M^?$ runs in strongly polynomial time.
      Since $p_N(\length{\varphi},\length{\str a})\geq P(\length{\varphi},\length{\str a})$ by \Cref{resu:property of a polynomial majorant}, the simulation comes to an end before the timer is empty whenever $\varphi$ is from the domain of $F$.
      This proves that $\tilde M^\varphi$ computes a total strongly polynomial-time computable extension of $F$.
      In particular $F\in\operatorname{SP}(A)$.
    \end{proof}

\section{Conclusion}
  
  The results of this paper are tightly connected to questions of whether or not it is possible to add clocks to certain machines.
  Clocking is a standard procedure to increase the domain of machines while maintaining its behavior on a set of \lq important\rq\ oracles and inputs.
  For regular Turing machines, clocking allows to turn any machine that runs in polynomial time on the inputs the user cares about into a machine that actually runs in polynomial time:
  Take the polynomial that bounds the running time on the important inputs and in each step check if this number of steps was exceeded.
  This machine runs in about the same time as the original machine due to the time constructibility of polynomials.
  When moving to oracle Turing machines, the polynomials have to be replaced by second-order polynomials and unfortunately, these turn out not to be time constructible.
  Thus, for oracle Turing machines the above procedure does not extend in a straight forward manner.
  Indeed, Theorem~\ref{resu:failure of relativization for polytime} proves that it is in principle impossible to clock a polynomial-time machine in general.
  This can be understood as a very strong version of the failure of time-constructibility of second-order polynomials.

  One of the main motivations Kawamura and Cook had when they restricted the domains of the functionals they considered to be length-monotone functions was to force clockability of polynomial-time machines  \cite{Kawamura:2012:CTO:2189778.2189780}.
  And indeed, in this framework the second-order polynomials can be proven time-constructible \cite{ArXiV}.
  The notion of strong polynomial-time computability introduced in this paper tackles the same problem from another angle:
  It introduces a subclass of the polynomial-time functionals such that clocking is possible on any domain.
  However, strong polynomial-time computability is a strictly stronger condition than polynomial-time computability.
  
  We hope that strong polynomial-time computability turns out to be a useful concept.
  We think it has potential for usefulness and that it is a further step towards the expectations of programmers what programs with subroutine calls should be considered fast as it removes the dependency of the running time on information that can not be read from the oracle in a fast way.

  \bibliography{bib}{}
\end{document}